
\documentclass{aamas2016}
\usepackage[usenames, dvipsnames]{color}
\usepackage{amsmath}


\pdfpagewidth=8.5truein
\pdfpageheight=11truein

\definecolor{purple}{RGB}{168,56,168}				

 \DeclareMathOperator*{\argmin}{arg\,min}
  \DeclareMathOperator*{\argmax}{arg\,max}
 \newtheorem{theorem}{Theorem}
 \newtheorem{lemma}{Lemma}
 \newtheorem{definition}{Definition}
 \newtheorem{example}{Example}
 \newtheorem{corollary}{Corollary}

\begin{document}


\title{Budgetary Effects on Pricing Equilibrium in Online Markets}



%
%
%
%

%

\numberofauthors{3}

\author{
%
\alignauthor
Allan Borodin\\
\affaddr{University of Toronto}\\
\affaddr{Toronto, Canada}\\
\email{bor@cs.toronto.edu}
\alignauthor
Omer Lev\\
\affaddr{University of Toronto}\\
\affaddr{Toronto, Canada}\\
\email{omerl@cs.toronto.edu}
\alignauthor
Tyrone Strangway\\
\affaddr{University of Toronto}\\
\affaddr{Toronto, Canada}\\
\email{tyrone@cs.toronto.edu}
}

\maketitle

\begin{abstract}
Following the work of Babaioff et al~\cite{BNL14}, we consider the pricing game 
with strategic vendors and a single buyer, modeling a scenario in which multiple competing vendors have very good knowledge of a buyer, as is common in online markets. We add to this model the 
realistic assumption that the buyer has a fixed budget and does not have 
unlimited funds. When the buyer's valuation function is additive, we are able 
to completely characterize the different possible pure Nash Equilibria (PNE) 
and in particular obtain a necessary and sufficient condition for 
uniqueness. Furthermore, we characterize the market 
clearing (or Walresian) equilibria for all submodular valuations.

Surprisingly, for certain monotone submodular function valuations, we show 
that the pure NE can exhibit some counterintuitive phenomena; namely, there is 
a valuation such that the pricing will be market clearing and within budget 
if the buyer does not reveal the budget but will result in a smaller set of 
allocated items (and higher prices for items) if the buyer does reveal the 
budget. It is also the case that the conditions that guarantee market clearing 
in Babaioff et al~\cite{BNL14} for submodular functions are not necessarily 
market clearing when there is a budget. Furthermore, with respect to social welfare, while without budgets all equilibria are optimal (i.e. POA = POS = 1), we show that with budgets the worst equilibrium may only achieve $\frac{1}{n-2}$ of the best equilibrium.

\end{abstract}


\category{K.4}{Computers and Society}{Electronic Commerce}
\category{J.4}{Social and Behavioral Sciences}{Economics}



\terms{Economics, Theory}


\keywords{pricing, price of anarchy, budget, Nash equilibrium}

\section{Introduction}


The question of how to price items in a market has existed for millennia, ever since people began trading with one another. Market pricing is clearly a central area of study in Economics. One of the most important aspects of these studies 
is market clearing (Walrasian) equilibria, initiated by Walras in 1874-1877 (See translation by Willam Jaffe \cite{WalrasJ03}). The advent of online markets has changed pricing in many respects, both to the vendors benefit and to their detriment. On the one hand, vendors can gain vast information about their customers, and can tailor their prices to each particular customer. On the other hand, the instant availability of a large amount of competitors allows customers to pick and choose among the items they want. The speed and size of online markets necessitates having simple rules for setting prices and understanding the nature of
pricing equilibria.  

This world, where vendors compete with each other in online marketplaces (e.g., Amazon or E-Bay), and are able to adapt their prices to their customer, inspired a recent line of research \cite{BNL14,LOBR15}, utilizing a game theoretical toolbox to address these scenarios. Here self interested agents (the vendors) post prices for their wares and a buyer consumes a subset of the items based on their valuation for the items and the posted prices. The vendors aim is to maximize their individual profits (we assume no transaction or production costs) and they will post prices as high as possible. If a vendor does not wish to participate in the market they may price their items at infinity. How vendors best position themselves in the market relative to other vendors -- and the valuation of the buyer -- creates interesting and non trivial scenarios. While we use buyer - seller terminology, this problem is applicable to other allocation scenarios that may arise in a multi-agent environment. For example one may envision a setting with potential employees (the vendors) and a company (the buyer). The company, subject to wage constraints, wishes to hire some subset of the employees. 

These works, and ours, are interested in the equilibria solutions of the games. 
In particular the focus is on the study of the pure Nash equilibrium (PNE). Informally a set of prices form a PNE if no single agent (vendor) can improve her position by modifying her price. We are interested in the properties that PNE poses. That is what items are purchased, what is the overall social welfare and when does the market clear.

We add to the setting of \cite{BNL14,LOBR15} the vital component of budgets. Previous work assumed the buyer had unlimited purchasing power, their decision limited only by the valuations. Now we cannot assume the buyer will purchase any bundle at a price lower than its value.
The buyer will consume the set of items with highest utility (i.e., net value, taking prices into account), that is within their budget, while the strategies of the vendors is determining the prices of the items they offer.

This brings about a departure from the previous results, as we show how the buyer's valuation and budget influence the market. Not only are some previous results now rendered impossible as they exceed the budget, but this more realistic assumption materially changes the structure of pricing from the intuitions formed in earlier research, as a more complex strategic behaviour needs to considered.


 
\subsection{Our Contributions}

We begin by studying additive buyer valuation functions. In the non budgeted world of Babaioff et al.~\cite{BNL14} and Lev et al.~\cite{LOBR15} this is an uninteresting case where there is a simple and unique equilibrium in which every item is sold. We show, naturally, that a budgeted buyer can no longer necessarily purchase all of the items for sale. We identify a condition on the valuation function and the budget which is sufficient to ensure a market clearing equilibrium exists. This equilibrium is interesting since each item is priced to provide identical utility to the buyer. We prove that under this condition the market clearing equilibrium is unique. We next examine what happens when the uniqueness condition does not hold in the additive case. We provide a method to identify a subset of the sellers whose existence in some sense generalizes the condition for 
uniqueness and thereby allows for a partial characterization of all PNE. We finally prove our condition for market clearance is necessary.

We then extend our equilibrium analysis to the case of monotone submodular valuation functions. We extend the notion of our conditions in the additive case to submodular valuations proving that a similar market clearing equilibrium exists. We also prove that the condition is again necessary to ensure market clearance. We next provide an example which shows a rather surprising consequence of the budget. We show that at certain equilibrium a buyer who has no budget can pay less and receive more items than their budgeted counterparts. We also show that our notion of generalizing the condition on the valuation and budget to reach an equal-utility equilibrium does not work for XOS valuations. 

Finally, we explore the budget effects on social welfare. We show that while it is known \cite{BNL14}  that without a budget the price of anarchy is 1, with a budget the price of anarchy can be arbitrarily high.


\subsection{Previous work}

As already mentioned, the study of market pricing and equilibria is 
one of the most classical areas of Economics. In particular, the 
study of Walrasian equilibria considers the question of which markets 
have market clearing equilibria; that is, an assignment of prices
to items such that when all agents take their preferred allocation in this pricing (i.e. 
an allocation in their demand set), all items are sold. We are interested especially in settings where there are distinct sellers and buyers, which are termed Fisher markets~\cite{BS05,GJTV07}. A lot of work has gone into examining this when all items are divisible (e.g., commodities, such as oil, grains, etc.), but we focus on the case of indivisible items, which is how most items are sold in say online markets.

One of the foundational ``modern'' papers in this regard is that of Gul and 
Stacchetti \cite{GulS99} who show that for indivisible items  the class of Gross Substitutes, a strict subfamily of submodular functions, is the largest class of functions containing unit demand buyers that always possess a Walrasian equilibrium. There
have been several papers that follow this work 
\cite{GJTV07,RastogiC07,AzarBJ10,BabaioffLNL14,Avigdor-ElgrabliR14,BranzeiCDFF014}. 
The basic emphasis in these papers 
is the existence and
convergence to such equilibria without explicit representation of a pricing
function. Moreover, while budgets are mentioned in  some of these papers, the
usual assumption is that the budgets are sufficiently large. 

Recently, as the ability of vendors to analyze individual buyers and personalize their prices in online markets has increased~\cite{WM01}, research in analyzing competitive pricing analysis in this scenario has increased. The model of Babaioff et al.~\cite{BNL14} is most similar to ours, with many strategic vendors each holding a single item, and a single buyer. The main difference from their work is that while they consider buyers with unlimited purchasing power, we limit ourselves to budget constrained buyers. They prove that in any game with monotone buyer valuations there always exists a pure Nash equilibrium that clears the market and maximizes social welfare. Furthermore, they show that for buyers with submodular valuations there is a unique equilibrium that maximizes social welfare, and that the price of anarchy (and hence price of stability) is 1.

While we do not apply the budgetary constraint to their setting, Lev et al.~\cite{LOBR15} generalized \cite{BNL14} by allowing sellers to sell multiple items, and letting them choose which items to offer and for what price. They prove that an equilibrium is not guaranteed to exist for submodular valuations. When an equilibrium does exist the price of anarchy and price of stability are roughly $\log n$ where $n$ is the number of items offered.

\section{Preliminaries}

Our setting has a set of vendors $N$ ($|N|=n$) in which vendor $i$ sells a single distinct item $i$. A strategy profile of the vendors is a price vector $\textbf{p}=(p_1,\ldots,p_n) \in \mathbb{R}_{\geq 0}^n$ where item $i$ is priced at $p_i$ ($p_i \geq 0$). We use $p(S)=\sum_{i \in S} p_i$ to represent the cost to the buyer for purchasing a set $S \subseteq N$. We write $p_{-i}$ to represent the set of prices excluding $p_i$, we write $(p_i,p_{-i})$ to represent the full set of prices $\textbf{p}$. Opposite the vendors we have a single buyer, represented by a valuation function $v()$, who faced with price vector $\textbf{p}$ consumes a set of items which maximises his utility.

\subsection{The Buyer}

The buyer will be represented by a valuation function and a budget. The valuation function $v : 2^{N} \rightarrow \mathbb{R}_{\geq 0}$ gives non negative value to each subset of $N$. We assume the function is normalized and monotone; that is, $v(\emptyset) = 0$ and $v(S) \leq v(T)$ for $S \subseteq T \subseteq N$. The budget $B \in \mathbb{R}_{>0}$ is simply the maximum capital the buyer has available to purchase items; the budget cannot be exceeded. 

We shall discuss 3 types of valuation functions:
\begin{description}
\item[Additive] This function is defined by a vector of per item valuations $v=(v_1,...,v_n)$ and for any $S \subseteq N$ the valuation is $v(S) = \sum_{i \in S} v_i$.

\item [Submodular] This function is characterized by the idea of decreasing marginal utilities. That is, for a submodular function we have $v(S) - v(S \setminus \{a\}) \geq v(T) - v(T \setminus \{a\})$ for $S \subseteq T \subseteq N$ and $a \in S$.

\item [XOS] An XOS (Exclusive Or of Singletons) function is defined by a finite set of additive functions $F=\{f_1,...,f_k\}$ such that  $v(S) = max_{f_i \in F} f_i(S)$ for every  $S \subseteq N$.
\end{description}

The set of additive functions is a strict subset of the set of monotone submodular functions which in turn is a strict subset of the set of XOS functions. It should be noted all of these sets of functions are strictly contained in the set of subadditive functions which are defined as: $v(S) + v(T) \geq V(S \cup T)$ for $S,T \subseteq N$. For a more in depth look at these functions see \cite{LehmannLN06}.

We assume the buyer has a budget bounded quasi-linear utility function. That is, for some bundle $S \subseteq N$ and pricing vector $\textbf{p}$ we have:

\[
 u_b(S,\textbf{p}) =
 \begin{cases} 
 \hfill v(S) - p(S) \hfill & \text{ if $p(S) \leq B$} \\
 \hfill -\infty \hfill & \text{ otherwise} \\
 \end{cases}
\]

To ease reading and notation, we will write $u_b(i, \textbf{p})$ in place of $u_b(\{i\},\textbf{p})$ and omit the $\textbf{p}$ if it is unambiguous. Following notation in \cite{BNL14}, the demand correspondence of a buyer with valuation function $v$ facing pricing vector $\textbf{p}$ is the family of sets that maximize the buyer's utility:

\[
D(v,\textbf{p}) = \{S \subseteq N : u_b(S) \geq u_b(T) \ \ \forall T \subseteq N\}
\] 

The buyer will consume a set $X(v,\textbf{p}) \in D(v,\textbf{p})$. We assume the buyer has access to a demand oracle which allows the buyer to find $X(v,\textbf{p})$. For ease of notation we usually write $X(v,\textbf{p})$ as $X_{\textbf{p}}$ Often we will find $|D(v,\textbf{p})|=1$ in the cases we study. 
A common assumption~\cite{BNL14} in cases where $|D(v,\textbf{p})|>1$ is that the buyer will opt for a set $S\subseteq N$ with the largest size. We call such a buyer maximal.

\subsection{Vendor Utility and Equilibrium}

As mentioned before, the utility of the vendors is simply the payment they receive for their items. That is, if the buyer consumes $X(v,\textbf{p})$ when presented with prices $\textbf{p}$ the utility to vendor $i$ is:

\[
 u_i(X(v,\textbf{p})) =
 \begin{cases} 
 \hfill p_i \hfill & \text{ if $i \in X(v,\textbf{p})$} \\
 \hfill 0 \hfill & \text{ otherwise} \\
 \end{cases}
\]

In a slight abuse of notation we will often omit the $X(v,\textbf{p})$ from $ u_i(X(v,\textbf{p}))$ instead we write $u_i(\textbf{p})$ implicitly assuming the $X(\cdot)$ and $v$ are in place. A pricing vector $\textbf{p}$ is defined to be a pure Nash equilibrium (PNE) if there does not exist any agents that can improve their utility by unilaterally modifying their price. That is, $\nexists i \in N$ such that $u_i(p_i',p_{-i}) > u_i(p_i,p_{-i}) = u_i(\textbf{p})$ for some  $p_i' \neq p_i$. We note that the sellers also have access to value oracle and are able to determine $X(v,\textbf{p})$.\\

We say a price vector $\textbf{p}$ is {\it market clearing} if $X(v,\textbf{p})=N$ and $\min_i p_i >0$. That is every item is bought and each vendor receives positive utility.

The game as described is entirely parametrized by the budget $B$ and valuation profile $v()$. The agents will price themselves presenting a $\textbf{p}$ that maximizes their individual utility. We are interested in studying the properties of stable pricing schemes, i.e., prices that are a PNE. 
In particular, when are the PNE unique, how must agents price their items, and what is consumed by the buyer. Note this is a game of full information:  the agents are assumed to know $B$ and $v$ and can see the prices $\textbf{p}$ posted by each of the agents.

%

\section{Additive Valuations}

We begin our study with a very simple class of buyer valuations, additive functions. 
Recall $v(\cdot)$ is additive if there exists a set of item values $(v_1,...,v_n)$ and $v(S) = \sum_{i \in S} v_i$ for $S \subseteq N$. As mentioned earlier, previous work~\cite{BNL14,LOBR15} does not explicitly study the additive case, as there is a very simple and unique solution in their setting: assuming a maximal non budgeted buyer, the seller holding item $k$ should simply price the item at $v_k$.  Hence, the maximal buyer will purchase the universe of items and social welfare is maximized. Since each item is being consumed, none have an incentive to lower their price. If $p_k < v_k$ then the agent holding $k$ should be able to increase $p_k$ to $v_k$ and still be consumed by the maximal buyer. Now consider the following instance of our budgeted game $(B=1, (v_1,v_2)=(2,\frac{1}{2}))$. It is easy to verify that the only PNE are of the form: $\textbf{p}=(1,x)$, for $x \in \mathbb{R}_{\geq 0}$, and $X_{\textbf{p}}=v_{1}$. The budget has enabled the vendor with the more valuable item to demand the entire budget excluding the other vendor from participating in the game. 

\subsection{A Unique PNE}

We now provide a sufficient condition (later, in Theorem~\ref{Market_Clearing_is_L}, to be shown necessary) on the buyer's valuation to ensure a unique market clearing PNE in which each vendor participates in the game. More specifically we show that under this condition there is a unique PNE that clears the market providing each vendor with positive utility. Throughout this section we assume $V(N) > B$, otherwise this setting reduces to the setting of $\cite{BNL14}$.


\begin{definition}
The \emph{relative valuation constraint} for set $S\subseteq N$, valuations $(v_{1},\ldots,v_{n})$ and budget $B$ is that for each $v_{i}\in S$, $ v_i > \frac{\sum_{v_{j}\in S\setminus\{i\}} v_j - B} {|S|-1}$
\end{definition}

When considering the constraint on the full set of items ($N$) this is equivalent to stating it as a constraint on the budget: for every $i\in N$, $B>\sum_{j\neq i}(v_{j}-v_{i})$.

We can see that this constraint does not imply that the valuation is a gross substitutes valuation. Namely, the additive function $v(S) = \sum_{i \in S} v_i$ for 3 items $(v_1, v_2, v_3)$ with the values $(2,2,2)$ correspondingly and a budget of $1$. The function satisfies the relative valuation constraint and is not gross substitutes: the pricing $(0.2,0.4,0.4)$ allows the buyer to buy all items, while the pricing $(0.3,0.4,0.4)$ forces the buyer to give up $v_{2}$ or $v_{3}$ despite their prices staying the same, hence not gross substitute\footnote{A different example, shown in Lehmann et al.~\cite{LehmannLN06}, is a budget-additive function $v(S) = \min\{B,\sum_{i \in S} v_i\}$ for 3 items $(v_1, v_2, v_3) = (1,1,2)$ and a budget $B=2$, which satisfies the relative valuation constraint and is not gross substitutes: the prices $(0,\frac{1}{2},1)$, in which $v_{1}$ and $v_{2}$ are purchased, in comparison to prices $(1,\frac{1}{2},1)$ in which only $v_{3}$ would be bought.}.

\begin{theorem}
\label{all_high_theorem}
Given an additive valuation profile $(v_1,...,v_n)$, where $\sum_i v_i > B$ and the valuations for $N$ follow the relative valuation constraint for all items, there is a unique PNE $\textbf{p}$ where $\min_i (p_i) > 0$, $\sum_i p_i = B$, $\forall (i,j), v_i - p_i = v_j - p_j$ and $X(v,\textbf{p})=N$.
\end{theorem}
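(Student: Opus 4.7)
The plan is to (i) exhibit the unique candidate $\mathbf{p}^*$ satisfying the four listed properties, (ii) verify $\mathbf{p}^*$ is a PNE, and (iii) show uniqueness by arguing that any PNE $\mathbf{q}$ with $X_{\mathbf{q}}=N$ and $\min_i q_i>0$ must coincide with $\mathbf{p}^*$. The equal-utility condition together with $\sum_i p_i = B$ pins down the prices uniquely: writing $u := (\sum_j v_j - B)/n$ and $p_i^* := v_i - u$, the four equalities hold by construction, positivity $p_i^* > 0$ is exactly the relative valuation constraint rearranged, and $u > 0$ follows from $\sum_j v_j > B$. Since every item yields the same positive utility $u$ at $\mathbf{p}^*$ and $p^*(N) = B$, the buyer can afford $N$ and strictly prefers it to every proper subset, so $X_{\mathbf{p}^*} = N$.

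To verify $\mathbf{p}^*$ is a PNE, I consider a generic vendor $i$'s deviation. Lowering $p_i^*$ keeps $N$ both affordable and utility-maximizing, so vendor $i$ merely loses revenue. For an upward deviation by $\epsilon > 0$, the bundle $N$ becomes unaffordable, forcing the buyer to drop items. Comparing the option ``drop item $i$ alone'' (utility $(n-1)u$) against ``keep $i$ at the raised price and drop some nonempty $K \subseteq N \setminus \{i\}$ with $p^*(K) \geq \epsilon$'' (utility $(n-|K|)u - \epsilon$), the former strictly exceeds the latter by $(|K|-1)u + \epsilon > 0$. Hence the buyer drops $i$, vendor $i$'s utility falls to $0$, and the deviation is unprofitable.

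Uniqueness is the more delicate step and is where the main obstacle lies. For any PNE $\mathbf{q}$ with $X_{\mathbf{q}} = N$ and every $q_i > 0$, I first establish $q(N) = B$: otherwise some vendor $i$ with $v_i - q_i > 0$ could raise their price by a tiny amount while the buyer still affords and strictly prefers $N$, and the alternative of $v_j - q_j = 0$ for every $j$ is incompatible with $\sum_j v_j > B$ since it would force $\sum_j v_j = q(N) \leq B$. Next I show all net utilities $u_j := v_j - q_j$ must coincide. The obstacle here is the buyer's combinatorial response after a deviation: in principle the buyer could drop any subset rather than a single item. The key observation is that singleton drops already suffice to refute non-equality: if $u_i > u_k$ for some pair $i, k$, then picking $\epsilon$ with $0 < \epsilon < \min(u_i - u_k, q_k)$ makes the buyer's option ``drop $k$, keep $i$'' strictly better than ``drop $i$,'' so the buyer keeps $i$ at the raised price and vendor $i$ gains, contradicting equilibrium. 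Hence all $u_j$ coincide, and combined with $q(N) = B$ this forces $\mathbf{q} = \mathbf{p}^*$.
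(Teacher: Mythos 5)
Your existence argument and your ``equal utilities'' step are correct and essentially identical to the paper's: the same candidate $p_i^*=v_i-\frac{\sum_j v_j - B}{n}$, the same verification that an upward deviation gets the deviator dropped, and the same single-item-swap argument showing that unequal net utilities let the top vendor raise its price. Your handling of the buyer's combinatorial response (comparing ``drop $i$ alone'' to ``keep $i$ and drop some $K$'') is, if anything, slightly more explicit than the paper's.

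However, there is a genuine gap in the uniqueness step. You prove only that $\mathbf{p}^*$ is the unique PNE \emph{among} profiles with $X_{\mathbf{q}}=N$ and $\min_i q_i>0$; the theorem, as the paper proves it, asserts that $\mathbf{p}^*$ is the \emph{only} PNE of the game under the relative valuation constraint. You never rule out a PNE in which some vendor is priced out of the market --- e.g.\ a profile $\mathbf{q}$ with $q(N)>B$ where the buyer purchases a proper subset and the remaining vendors cannot profitably enter. This is not a vacuous worry: when the relative valuation constraint fails, exactly such non-market-clearing equilibria do exist (the paper's base-set $L$ examples), so something about the constraint must be invoked to exclude them. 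The missing ingredient is the paper's Lemma~\ref{move_in_lemma}: the constraint $v_i>\frac{\sum_{j\neq i}v_j-B}{n-1}$ on the full set $N$ implies $v_i>\frac{\sum_{j\in A}v_j-B}{|A|}$ for every proper subset $A$. Since the right-hand side upper-bounds the minimum per-item utility the buyer derives from any chosen set $A$ that exhausts the budget, an excluded vendor $i$ can always post a small positive price and displace the least-useful chosen item, so no such profile is an equilibrium. (One also needs the easy observation that a vendor priced at $0$ earns nothing and can likewise profitably enter.) Without this descent-to-subsets lemma, your uniqueness claim is strictly weaker than the theorem.
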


Before we prove Theorem $\ref{all_high_theorem}$ we will find the following lemma useful:

\begin{lemma}
\label{move_in_lemma}
Given an additive valuation profile $(v_1,...,v_n)$ with budget $B$ and sets $A$ and $U$ where $A \subset U \subseteq N$. If for $i \in U$, $v_i > \frac{\sum_{j \in U \setminus \{i\}} v_j - B}{|U|-1}$ then $v_i > \frac{\sum_{j \in A} v_j - B}{|A|}$.

\end{lemma}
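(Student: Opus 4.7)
My plan is to reformulate both the hypothesis and conclusion as equivalent inequalities involving the auxiliary quantity $\phi(S) := (\sum_{j \in S} v_j - B)/|S|$. A short algebraic manipulation (using $\sum_{j \in U \setminus \{i\}} v_j = \sum_{j \in U} v_j - v_i$) shows that the hypothesis $v_i > (\sum_{j \in U \setminus \{i\}} v_j - B)/(|U|-1)$ is equivalent to the cleaner statement $v_i > \phi(U)$, while the desired conclusion is simply $v_i > \phi(A)$. So the lemma reduces to the monotonicity assertion $\phi(A) \leq \phi(U)$ under the hypothesis.

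The key ingredient is the identity $\phi(U) - \phi(U \setminus \{k\}) = (v_k - \phi(U))/(|U|-1)$, obtained by putting the two fractions over a common denominator. This difference is positive precisely when $v_k > \phi(U)$. Reading ``for $i \in U$'' in the hypothesis as asserting the constraint for every element of $U$ (which matches how the lemma is invoked in Theorem~\ref{all_high_theorem}, where the relative valuation constraint is imposed on all of $N$), we obtain a strict decrease in $\phi$ each time we remove an element of $U$. Moreover the hypothesis propagates to the smaller set: every surviving $v_j$ still satisfies $v_j > \phi(U) > \phi(U \setminus \{k\})$, so we may remove another element and continue. Iterating the removal argument across $U \setminus A$ yields the chain $\phi(A) \leq \phi(U) < v_i$, establishing the conclusion.

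The main obstacle I anticipate is setting up the induction carefully — in particular, maintaining the invariant ``every surviving element satisfies the relative valuation constraint with respect to the current shrunken set'' after each removal, so that the identity can be reapplied at the next step. Once this invariant is stated explicitly, the remainder is routine algebraic bookkeeping based on the identity above.
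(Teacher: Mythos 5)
Your proof is correct, but it takes a genuinely different route from the paper's. The paper's argument is a one-shot algebraic bound: it takes $i=\argmin_{j\in U\setminus A}v_j$, splits $\sum_{j\in U\setminus\{i\}}v_j$ into the parts over $D=U\setminus(A\cup\{i\})$ and over $A$, and uses minimality to bound $\sum_{j\in D}v_j\geq |D|\,v_i$, which immediately yields the conclusion (other elements of $U\setminus A$ then inherit it since their values are at least $v_i$). Notably, that argument invokes the hypothesis for only a single element of $U$. Your argument instead introduces the potential $\phi(S)=(\sum_{j\in S}v_j-B)/|S|$, observes that the relative valuation constraint is exactly $v_i>\phi(S)$, and uses the identity $\phi(U)-\phi(U\setminus\{k\})=(v_k-\phi(U))/(|U|-1)$ to peel off elements of $U\setminus A$ one at a time while maintaining the constraint as an invariant. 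This requires the hypothesis for every element you remove (all of $U\setminus A$), which is a slightly stronger reading of ``for $i\in U$'' than the paper's proof needs --- but it matches every invocation of the lemma in the paper, so nothing is lost. What your route buys is a cleaner conceptual reformulation (the constraint says each value exceeds the average per-item ``deficit'' $\phi$) and, as a byproduct, the fact that the constraint propagates to the intermediate subsets, which is precisely the observation the paper later needs when arguing that the PNE base set $L$ is the unique maximal set satisfying the constraint. The paper's version is shorter and hypothesis-minimal; yours is more structural. Both are sound.
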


\begin{proof}

Let $i = \argmin_{j \in U \setminus A}v_j$, let $D = U \setminus (A \cup \{i\})$.


From the condition on $v_i$:

\begin{align*}
v_i &> \frac{\sum_{j \in U \setminus \{i\}} v_j - B}{|U|-1} \\
&= \frac{\sum_{j \in D} v_j+ \sum_{j \in A} v_j - B} {|U|-1} \\
\implies (|U|-1)v_i - \sum_{j \in D} v_j &> \sum_{j \in A} v_j - B\\
((|U|-1)-|D|)v_i &\geq (|U|-1)v_i - \sum_{j \in D} v_j > \sum_{j \in A} v_j - B\\
\end{align*}

Where the last line is because $v_i$ was chosen to be the smallest value amongst those in $U \setminus A$. So now we have that:

\begin{align*}
((|U|-1)-|D|)v_i &> \sum_{j \in A} v_j - B\\
\implies v_i &> \frac{\sum_{j \in A} v_j - B}{(|U|-1)-|D|}\\
&= \frac{\sum_{j \in A} v_j - B}{|A|}
\end{align*}
\end{proof}

Now we are ready to prove Theorem $\ref{all_high_theorem}$:

\begin{proof}
We will divide the proof into two parts. First we will define a PNE $\textbf{p}$ which meets our criteria. We then will show that this $\textbf{p}$ is the only possible PNE when $N$ satisfies the  \emph{relative valuation constraint}.

$\textbf{Existence of PNE \textbf{p}:}$ Let $p_i = \frac{B + (n-1)v_i - \sum_{j \neq i} v_j}{n}$. First we note since $v_i > \frac{\sum_{j \neq i} v_j - B} {n-1}$ we get that $p_i > 0$. It is also easy to see that $\sum_i p_i = B$.

We now argue $v_i - p_i = v_j - p_j$ for any pair of vendors $i$ and $j$. Fix some vendor $k$:

\begin{align*}
v_k - p_k &= v_k - \frac{B + (n-1)v_k - \sum_{j \neq k} v_j}{n}\\
&= \frac{nv_k - B - (n-1)v_k + \sum_{j \neq k}v_j}{n} \\
&= \frac{\sum_i v_i - B}{n} \\
&= \frac{v(N) - p(N)}{n}
\end{align*}

That is, each seller $i$ is providing exactly $\frac{1}{n}$ of $v(N) - p(N)$, so each item is priced so they provide identical utility to the buyer.

Next we show $v_i > p_i$ for each item $i$. Suppose for contradiction there is some $i$ where $p_i \geq  v_i$. Since $v_i - p_i = v_j - p_j$ for any item $j$ it must be that $p_j \geq  v_j$. Thus we get $p(N) \geq  v(N) > B$, which contradicts $p(N)=B$ by construction. 

Now we can prove $\textbf{p}$ is a PNE:

Because $\sum_i p_i = B$ and $0 < p_i < v_i$ each seller has their item chosen.
 Thus none have an incentive to lower their price. Say seller $i$ increases their price by of $\epsilon > 0$. Since $(p_i + \epsilon) + \sum_{j \neq i} p_j > B$ the buyer will consume at most $n-1$ items. Since $v_i - p_i = v_j - p_j$ for any pair of items $i$ and $j$ we get $v_i - (p_i + \epsilon) < v_j - p_j$, so item $i$ is not consumed leaving its seller with 0 utility. So $i$ should not increase its price. Thus $\textbf{p}$ is a PNE.

$\textbf{Uniqueness of PNE \textbf{p}:}$ Given that the relative valuation constraint holds for $N$, we now argue that 
$\textbf{p}$ is the only PNE. Assume we have another PNE $\textbf{p}'=(p'_{1},\ldots,p'_{n})$. First we show $\sum_i p'_i = B$:

Suppose $\sum_i p'_i < B$, then clearly $X_{\textbf{p}'} = N$ since $\textbf{p}'$ is a PNE. 
Since $\sum_i v_i > B$ it must be that $p'_j < v_j$ for some item $j$. Let $0 < \epsilon \leq \min\{B - (\sum_i p'_i), \frac{v_j - p'_j}{2}\}$. It is easy to see $(p'_j + \epsilon, p'_{-j})$ is a better solution for seller $j$. By the first term in the $\min$ the new pricing vector is not over budget by the second term $v_j < p'_j$ so the buyer still receives positive utility from item $j$. Thus item $j$ is still consumed and we see that seller $j$ should increase their price by $\epsilon$.

Suppose $\sum_i p'_i > B$, then under this pricing scheme there is an agent $i$, $i \not\in X(v,\textbf{p}')$. We now argue $i$ is able to offer a positive price where this item will be picked. For the remainder of this section when referring to $X_{\textbf{p}'}$ we exclude items which are priced at zero. We can assume that $\sum_{j \in X_{\textbf{p}'}} p'_j = B$, if this were not the case seller $i$ could offer a positive price and still be consumed. 

Let $s = \argmin_{j \in X_{\textbf{p}'}} u_b(j)$, that is $s$ is the seller providing the least amount of utility amongst those who are picked. If $v_i > u_b(s)$ seller $i$ can offer a price of $\min\{\frac{v_i - u_b(s)}{2}, \frac{B - \sum_{j \in X_{\textbf{p}'} \setminus s} p'_j} {2}\}$. The first term in the $\min$ ensures that the buyer gets more utility from $i$ than $s$, the second term ensures the buyer can afford the current chosen set when $s$ will be replaced by $i$. Thus vendor $i$ can replace vendor $s$ in the chosen set.

We now prove $v_i > u_b(s)$. We know that $u_b(s)$ is at most the average utility provided by a seller in $X_{\textbf{p}'}$. That is $u_b(s) \leq \frac{\sum_{j \in X_{\textbf{p}'}} v_j - B}{|X_{\textbf{p}'}|}$. So we must show show $v_i > \frac{\sum_{j \in X_{\textbf{p}'}} v_j - B}{|X_{\textbf{p}'}|}$. This is simply a direct result of Lemma~\ref{move_in_lemma}, where we take the sets $X(v,\textbf{p}')$ and $N$ to be $A$ and $U$ in the lemma respectively.

Thus at any PNE $\textbf{p}'$ each seller offers a positive price is bought and the budget is consumed.

We finally argue $u_b(i) = u_b(j)$ for all sellers $i$ and $j$ at any PNE $\textbf{p}'$. Let $i = \argmax_k u_b(k)$ and $j = \argmin_k u_b(k)$. Assume, for contradiction, $u_b(i) > u_b(j)$. Let $0 < \epsilon \leq \min\{\frac{B - \sum_{k \neq j} v_k}{2}, \frac{u_b(i) - u_b(j)}{2}\}$ and consider the point $(p'_i + \epsilon, p'_{-i})$. Since $\textbf{p}'$ is a PNE, $\sum_i p'_i = B$ so under $(p'_i + \epsilon, p'_{-i})$ at least one item will not be consumed. By the first term in the $\min$, $p'_i + \epsilon + \sum_{k \neq i, j} p'_k < B$, so the buyer can afford all of the items except for $j$. By the second term in the $\min$, $v_i - (p'_i + \epsilon) = u_b(i) - \epsilon > u_b(j)$ so the buyer gets more utility from seller $i$ than from seller $j$. Since seller $j$ is still providing the least utility, it will be the one left out of the chosen set. Thus we see that seller $i$ should increase the price by $\epsilon$ and that $\textbf{p}'$ is not a PNE. Thus it must be that $u_b(i) = u_b(j)$ for all sellers $i$ and $j$.

Thus when the \emph{relative valuation constraint} holds for $N$ we get that all PNE must take the form of $\textbf{p}$ as defined above.

\end{proof}

\subsection{Non Market Clearing PNE and a Complete Characterization of 
Market Clearing PNE}

In this section we will explore what happens when the \emph{relative valuation constraint} does not hold for $N$. Without loss of generality, let the elements of $N$ be in decreasing order of valuations; i.e. if $i < j$ then $v_i \geq v_j$.


\begin{definition}
A \emph{PNE base set} is a set $L\subseteq N$, which is constructed by ordering the elements of $N$ in non-increasing order of valuations. Starting with the first (most valuable) element of $N$ iteratively add the next element to $L$ until reaching an element $i$ s.t. $v_i \leq \frac{\sum_{j \in L} v_j - B}{|L|}$.

Note we are not counting $v_i$ in the above sum nor the size of $L$.

\end{definition}

In a sense $L$ is a maximal set of the largest items. 
We note that a non-empty $L$ must exist since $L=\{1\}$ certainly satisfies the definition. Furthermore, note that for several equal valued items, if one of them is in $L$, then all of them are in $L$.

We now prove some additional useful facts about $L$:

\begin{lemma}
\label{large_value_lemma}
$\sum_{i \in L} v_i > B$
\end{lemma}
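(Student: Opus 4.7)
The plan is to split into two cases according to whether $L = N$ or $L \subsetneq N$, and in each case to extract the bound directly from the construction of $L$. Case~1, $L = N$, is immediate: the standing hypothesis of the section is $v(N) = \sum_{i} v_i > B$, so $\sum_{i \in L} v_i > B$ with nothing further to prove.

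For Case~2, $L \subsetneq N$, I would let $i^{*}$ denote the element of $N \setminus L$ whose inspection caused the iteration to halt --- that is, the first index in the non-increasing ordering that was not added to $L$. By the stopping rule in the definition of a PNE base set, $v_{i^{*}} \leq \frac{\sum_{j \in L} v_j - B}{|L|}$. Since valuations are non-negative, multiplying through by $|L| \geq 1$ gives $\sum_{j \in L} v_j \geq B$ for free.

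The only real subtlety --- and what I expect to be the main obstacle --- is upgrading this to a strict inequality, since the stopping rule itself is only $\leq$. For that I would argue by contradiction: if $\sum_{j \in L} v_j = B$, then the stopping inequality collapses to $v_{i^{*}} \leq 0$, forcing $v_{i^{*}} = 0$. Because elements are added to $L$ in non-increasing order of valuation and $i^{*}$ is the first excluded index, every $k \in N \setminus L$ satisfies $v_k \leq v_{i^{*}} = 0$ and hence $v_k = 0$. Therefore $v(N) = \sum_{j \in L} v_j + \sum_{k \notin L} v_k = B + 0 = B$, contradicting the standing assumption $v(N) > B$. This rules out equality and yields $\sum_{i \in L} v_i > B$, completing the proof.
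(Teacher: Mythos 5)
Your proof is correct and follows essentially the same route as the paper's: split on whether $L = N$, and otherwise rearrange the stopping inequality $v_{i^{*}} \leq \frac{\sum_{j \in L} v_j - B}{|L|}$ to bound $\sum_{j\in L} v_j$ from below by $B$. In fact your handling of strictness is slightly more careful than the paper's, which simply asserts that the excluded item has $v_s > 0$, whereas your contradiction argument also covers the (technically permitted, since valuations are only assumed non-negative) case of a zero-valued excluded item.
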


\begin{proof}

If $L=N$ then since we assumed $\sum_i v_i > B$ we are done.

Let us suppose $L \neq N$ then there must be an element $s$ s.t. $v_s \leq \frac{\sum_{j \in L} v_j - B}{|L|} $. Rearranging we get that $|L|v_s \leq \sum_{j \in L} v_j - B$, and since $v_s > 0$ we get that $\sum_{j \in L} v_j - B > 0$. Thus we see that $L$ contains strictly more value than the budget.

\end{proof}

\begin{lemma}
\label{relative_large_lemma}
$\forall j \in L, v_j > \frac{\sum_{k \in L \setminus j} v_k - B}{|L|-1}$
\end{lemma}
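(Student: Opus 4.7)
The plan is to reduce the statement to an equivalent but simpler one, namely that for every $j \in L$, $v_j > \frac{\sum_{k \in L} v_k - B}{|L|}$. The stated inequality $v_j > \frac{\sum_{k \in L \setminus j} v_k - B}{|L|-1}$ is algebraically equivalent to this: multiplying out and moving $v_j$ between the two sides of the numerator, both reduce to $|L| v_j > \sum_{k \in L} v_k - B$. So it suffices to establish the cleaner uniform bound in which the right-hand side no longer depends on $j$.

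Having made this reduction, I would order $N$ so that $v_1 \geq v_2 \geq \cdots \geq v_n$ and write $L = \{1, 2, \ldots, \ell\}$, so that element $\ell$ is the last one the iterative construction actually added. Because $\ell$ was added rather than skipped by the stopping rule, the defining property of $L$ gives $v_\ell > \frac{\sum_{k=1}^{\ell-1} v_k - B}{\ell - 1} = \frac{\sum_{k \in L \setminus \ell} v_k - B}{|L|-1}$, which by the equivalence above is exactly $v_\ell > \frac{\sum_{k \in L} v_k - B}{|L|}$. Since the construction orders elements by non-increasing valuation, $v_j \geq v_\ell$ for every $j \in L$, so the inequality propagates: $v_j \geq v_\ell > \frac{\sum_{k \in L} v_k - B}{|L|}$ for every $j \in L$. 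Applying the equivalence in reverse recovers the desired $v_j > \frac{\sum_{k \in L \setminus j} v_k - B}{|L|-1}$.

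The main (quite small) obstacle here is simply recognizing the algebraic equivalence between the $j$-dependent ratio $\frac{\sum_{k \in L \setminus j} v_k - B}{|L|-1}$ and the $j$-independent ratio $\frac{\sum_{k \in L} v_k - B}{|L|}$; once this is in hand, the claim reduces to a direct invocation of the rule under which the last element of $L$ was admitted plus monotonicity of the valuations. The only edge case to note is $|L|=1$, where $|L|-1 = 0$; this is either vacuous (empty sum, $-B/0$ interpreted as $-\infty$) or can be handled by simply restricting the statement to $|L| \geq 2$, which is the only case where the inequality is non-trivially used later.
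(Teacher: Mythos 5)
Your proof is correct and follows essentially the same route as the paper's: both invoke the admission rule for the last (minimal-valued) element added to $L$ and then propagate the bound to all of $L$ via the non-increasing ordering. Your reformulation to the $j$-independent ratio $\frac{\sum_{k \in L} v_k - B}{|L|}$ is just a cleaner algebraic packaging of the paper's ``swap $v_j$ for $v_i$ in the sum'' step, and your note on the $|L|=1$ edge case is a small point the paper glosses over.
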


\begin{proof}

Let $i$ be the last element added to $L$, that is $v_i$ is the smallest value amongst items in $L$. When we added $i$ to $L$ it was because $v_i > \frac{\sum_{j \in L'} v_j - B}{|L'|}$ where $L'$ is $L \setminus \{i\}$. Since $|L| -1= |L'|$ we get $v_i > \frac{\sum_{j \in L \setminus i} v_j - B}{|L|-1}$. 
Now trade $v_i$ with any element in the sum; the value on the left of the inequality can't decrease since $v_i$ was minimal and the value on the right can't increase since we are swapping in $v_i$ for some larger element $v_j$. Thus the inequality still holds. 

\end{proof}

It is easy to see that $L$ as defined above is the unique maximal set such that $\forall i \in L, v_i > \frac{\sum_{j \in L \setminus i} v_j - B}{|L|-1}$. 
If this was not the case and some set $M$ existed s.t. $\forall i \in M, v_i > \frac{\sum_{j \in M \setminus i} v_j - B}{|M|-1}$ and $L \subsetneq M$. Lemma~\ref{move_in_lemma}, where $L$ and $M$ take the place of $A$ and $U$ in the lemma, implies that for each $i \in M$ $v_i > \frac{\sum_{j \in L} v_j - B}{|L|}$. Thus when constructing $L$ and considering element $(|L|+1) \in M$ we should have added it to $L$. Thus $L$ is the largest set for which the \emph{relative valuation constraint} holds.

\begin{lemma}
There is always a PNE $\textbf{p}$ where $X_{\textbf{p}}=L$ and $v_i - p_i = v_j - p_j$ for all pairs of elements in $L$.
\end{lemma}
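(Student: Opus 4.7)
The strategy is to apply Theorem~\ref{all_high_theorem} to the sub-market consisting of only the items of $L$ with the same budget $B$, and then extend the resulting pricing to the items of $N\setminus L$ in a way that keeps them outside the demand set and off-equilibrium. The set $L$ satisfies the hypotheses of Theorem~\ref{all_high_theorem}: Lemma~\ref{large_value_lemma} gives $v(L) > B$, while Lemma~\ref{relative_large_lemma} establishes the relative valuation constraint on $L$. Applying Theorem~\ref{all_high_theorem} to the $L$-sub-market yields prices $(p_i)_{i \in L}$ with $p_i > 0$, $\sum_{i \in L} p_i = B$, and a common residual utility $U^{*} := v_i - p_i = (v(L) - B)/|L| > 0$ for all $i \in L$, with demand exactly $L$ inside the sub-market.

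For each $i \notin L$ I would set $p_i = v_i$ (any sufficiently large price would do). The structural fact underlying the rest of the proof is that every item outside $L$ satisfies $v_j \leq U^{*}$: the first item that failed the iterative test defining $L$ satisfies $v_i \leq (v(L) - B)/|L| = U^{*}$, and since items were considered in non-increasing value order the same inequality propagates to every $j \notin L$. Consequently, in the full game the utility contribution of any $k \notin L$ is $v_k - p_k = 0 \leq U^{*}$, while every item of $L$ contributes $U^{*} > 0$. Therefore no bundle beats $L$: swapping any $j \in L$ for a $k \notin L$ strictly decreases utility, and adding a $k \notin L$ exceeds the budget (which is saturated by $L$). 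Hence $X_{\textbf{p}} = L$ and $v_i - p_i = v_j - p_j$ for all $i,j \in L$ by construction.

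To verify PNE, I would examine the two kinds of deviations. For $i \in L$, the argument mirrors the equilibrium argument of Theorem~\ref{all_high_theorem}: lowering $p_i$ only reduces revenue, while raising $p_i$ by any $\epsilon > 0$ overshoots the budget and drops $i$'s residual utility strictly below that of every other element of $L$, forcing $i$ out of the demand set; items outside $L$ contribute only $0$ utility and so cannot serve as a replacement that keeps $i$ in. For $i \notin L$ to earn positive revenue from some deviation $p_i' > 0$, the buyer would have to include $i$ either by fitting it into the (non-existent) residual budget or by displacing some $j \in L$; the latter requires $v_i - p_i' \geq U^{*}$, which forces $p_i' \leq v_i - U^{*} \leq 0$, a contradiction. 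The main subtlety is tie-breaking for the maximal buyer when outside items are priced at zero (she would otherwise add them freely, changing $X_{\textbf{p}}$ away from $L$); choosing $p_k = v_k$ makes every outside item contribute exactly $0$ utility while costing positive budget, which cleanly pins the unique demanded bundle to $L$ and keeps the equilibrium check tight.
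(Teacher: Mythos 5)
Your proof is correct and follows essentially the same route as the paper: equal-residual-utility prices on $L$ (obtained by invoking Theorem~\ref{all_high_theorem} on the $L$-sub-market via Lemmas~\ref{large_value_lemma} and~\ref{relative_large_lemma}, which is exactly the explicit price formula the paper writes down), plus the observation that every $v_k$ with $k\notin L$ is at most $\frac{v(L)-B}{|L|}$, so outside items can never profitably displace an item of $L$. The only difference is that the paper prices outside items at $0$ (relying on its convention of excluding zero-priced items from $X_{\textbf{p}}$), whereas you price them at $v_k$ to sidestep the maximal-buyer tie-breaking issue — a cosmetic variation that you correctly flag and handle.
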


\begin{proof}

Let $\textbf{p}=(p_1,...,p_{|L|},0...,0)$ where \\ $p_i = \frac{B + (|L|-1)v_i - \sum_{j \in L \setminus i} v_j}{|L|}$. 

By Lemma~\ref{relative_large_lemma} we know $p_i > 0$. By the construction of each $p_i$, $\sum_i p_i = B$ and $v_i - p_i = v_j - p_j = \frac{\sum_{j \in L} v_j - B}{|L|}$ for all pairs $i,j \in L$. As in Theorem~\ref{all_high_theorem} no one agent in $L$ has an incentive to lower their price since they are being sold. Furthermore, if one of the agents in $L$ were to increase their price, they would be removed from $X_{\textbf{p}}$ as they now provide the least utility in $L$, and prices are now over budget. Finally combining Lemma~\ref{large_value_lemma} and a similar argument found in Theorem~\ref{all_high_theorem} we get that $p_i \leq v_i$ for each $i \in L$.

Now we argue that none of the agents in $N \setminus L$ have an incentive to increase their price. Let $v_s$ be the largest value amongst the agents in $N \setminus L$. By the construction of $L$, $v_s \leq \frac{\sum_{j \in L} v_j - B}{|L|}$. If agent $s$ were to offer a price $p_s > 0$ it would not be chosen since $v_s - p_s < \frac{\sum_{j \in L} v_j - B}{|L|}$. Since we are now over budget and the utility provided by $s$ is lower than the utility provided by any of the agents in $L$, $s$ is not chosen. Since $v_s$ was the largest value amongst the agents not in $X_{\textbf{p}}$ none of the agents outside of $X_{\textbf{p}}$ have a sufficient valuation to be chosen.

\end{proof}

Unfortunately we do not get a nice generalisation of Theorem~\ref{all_high_theorem} where at all PNE $v_i - p_i = v_j - p_j$ for each pair $i,j \in L$.

\begin{example}\label{PneNotL}
Let the budget $B=1$, the valuations $\textbf{v} = (2, 1.5, 0.6, 0.6)$ and finally the prices $\textbf{p}=(0.6, 0.4, 0.3, 0.3)$. The utility gained from each agent is simply $\textbf{u}=(1.4, 1.1, 0.3, 0.3)$. It is easy to see that $L = \{1,2\}$ and this is a non market clearing PNE where $X=L$. If the first agent increased their price the new consumed set would be $\{2,3,4\}$. If the second agent increased their price the new consumed set would be $\{1,3\}$ or $\{1,4\}$. And if the third or fourth agent dropped their price to any positive value the consumed set would still be $L=\{1,2\}$.
\end{example}

We can, however, obtain a partial characterization showing that at any PNE $\textbf{p}$ the base set $L$ must be a part of $X_{\textbf{p}}$.

\begin{lemma}
\label{L_in_X_lemma}

Let $\textbf{p}$ be some PNE and $X_{\textbf{p}}$ be the set consumed (not including those selling for free) it must be that $L \subseteq X_{\textbf{p}}$.

\end{lemma}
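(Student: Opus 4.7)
The plan is a contradiction argument: assume $\textbf{p}$ is a PNE and some $i \in L \setminus X_{\textbf{p}}$ exists, and exhibit a profitable deviation for seller $i$. Note that $v_i > 0$ for every $i \in L$ (combining Lemma~\ref{relative_large_lemma} with Lemma~\ref{large_value_lemma} by clearing denominators), so $i$ is a serious candidate. If $p(X_{\textbf{p}}) < B$, the deviation is immediate: $i$ posts $p_i' = \tfrac{1}{2}\min\{v_i,\, B - p(X_{\textbf{p}})\} > 0$, which leaves $X_{\textbf{p}} \cup \{i\}$ affordable and strictly preferred to any bundle not containing $i$ (prices outside of $i$ are unchanged, so the best such bundle remains $X_{\textbf{p}}$). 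Hence the buyer consumes $i$ and $i$ earns $p_i' > 0$, contradicting the PNE. I therefore focus on the harder case $p(X_{\textbf{p}}) = B$.

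Let $s = \argmin_{j \in X_{\textbf{p}}} u_b(j)$. Because the minimum of a multiset is at most its mean and $p(X_{\textbf{p}}) = B$,
\[
 u_b(s) \;\leq\; \frac{\sum_{j \in X_{\textbf{p}}} v_j - B}{|X_{\textbf{p}}|}.
\]
The crux of the proof is to show $v_i$ strictly exceeds the right-hand side. I would split on whether $X_{\textbf{p}} \cap L$ is empty. If it is, every $j \in X_{\textbf{p}}$ lies in $N \setminus L$; by the construction of $L$ (items of value equal to a member of $L$ are themselves in $L$), each such $j$ satisfies $v_j < v_i$, and together with $B > 0$ this immediately gives the bound. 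Otherwise, apply Lemma~\ref{move_in_lemma} with $A = X_{\textbf{p}} \cap L$ and $U = L$: the relative-valuation hypothesis on $L$ is exactly Lemma~\ref{relative_large_lemma}, and $i \in L \setminus A$ witnesses $A \subsetneq U$, yielding $v_i > \frac{\sum_{j \in X_{\textbf{p}} \cap L} v_j - B}{|X_{\textbf{p}} \cap L|}$. A short arithmetic propagation step — clear denominators, then add the inequality $v_i \cdot |X_{\textbf{p}} \setminus L| \geq \sum_{j \in X_{\textbf{p}} \setminus L} v_j$ (valid since $v_j < v_i$ for every $j \in X_{\textbf{p}} \setminus L$), and reassemble the quotient — transports the bound from $X_{\textbf{p}} \cap L$ to all of $X_{\textbf{p}}$, giving $v_i > u_b(s)$.

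With $v_i > u_b(s)$ in hand, let $i$ deviate to $p_i' = \tfrac{1}{2}\min\{p_s,\, v_i - u_b(s)\} > 0$. The bundle $X' = (X_{\textbf{p}} \setminus \{s\}) \cup \{i\}$ is affordable because $p_i' \leq p_s$, and the buyer's utility on $X'$ equals $v(X_{\textbf{p}}) - B + (v_i - p_i') - u_b(s)$, which by the choice $p_i' < v_i - u_b(s)$ strictly exceeds $v(X_{\textbf{p}}) - B$. Since prices outside of $i$ are unchanged, the buyer's best bundle avoiding $i$ is still $X_{\textbf{p}}$ with utility $v(X_{\textbf{p}}) - B$; hence the buyer's new optimum must contain $i$, so $i$ earns $p_i' > 0$, contradicting that $\textbf{p}$ is a PNE.

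The main obstacle is the averaging step in the $p(X_{\textbf{p}}) = B$ case: Lemma~\ref{move_in_lemma} only certifies the desired inequality relative to subsets of a set obeying the relative valuation constraint, and $L$ is the only such set at hand, whereas we need a bound against the possibly larger consumed set $X_{\textbf{p}}$. Bridging this gap by splitting $X_{\textbf{p}}$ along $L$ and exploiting the strict value gap between $L$ and its complement is the technical core of the argument.
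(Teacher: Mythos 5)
Your proof is correct, and it reaches the lemma through the same basic mechanism as the paper (a missing member $i$ of $L$ can always undercut the least‑useful purchased item), but with a genuinely different decomposition. The paper splits on whether $X_{\textbf{p}}$ contains items outside $L$: if so, it takes $s$ to be the least‑utility element of $X_{\textbf{p}}\cap(N\setminus L)$, notes the bare value gap $v_i > v_s$, and has $i$ simply match $s$'s price; if instead $X_{\textbf{p}}\subsetneq L$, it invokes Lemma~\ref{move_in_lemma} with $A=X_{\textbf{p}}$ and $U=L$ exactly as in the uniqueness argument of Theorem~\ref{all_high_theorem}. You split instead on whether the budget is exhausted, and in the exhausted case you prove a single averaging bound $v_i > u_b(s)$ against the \emph{global} least‑utility element of $X_{\textbf{p}}$. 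That forces the extra bridging step of extending Lemma~\ref{move_in_lemma}'s bound from $X_{\textbf{p}}\cap L$ to all of $X_{\textbf{p}}$, and your step is valid: clearing denominators gives $|X_{\textbf{p}}\cap L|\,v_i > \sum_{j\in X_{\textbf{p}}\cap L} v_j - B$, adding $|X_{\textbf{p}}\setminus L|\,v_i \geq \sum_{j\in X_{\textbf{p}}\setminus L} v_j$ (which holds because $L$ is a value‑downward‑closed prefix, so every $j\notin L$ has $v_j<v_i$) and recombining yields the bound over $X_{\textbf{p}}$. The trade‑off is that the paper's first case is a one‑line price‑matching swap while yours costs a small computation, but in exchange you get one uniform deviation recipe, and you make explicit the under‑budget case that the paper leaves implicit in its back‑reference to Theorem~\ref{all_high_theorem}.
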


\begin{proof}

Assume for contradiction that $L \not\subseteq	X_{\textbf{p}}$ at some PNE $\textbf{p}$
We break this into two cases.

Case 1 ($X_{\textbf{p}} \cap (N \setminus L) \neq \emptyset$):

That is, there are elements of $N \setminus L$ in $X_{\textbf{p}}$. Let $s$ be the element of $X_{\textbf{p}} \cap (N \setminus L)$ providing the least utility to the buyer. Let $i$ be some element of $L \setminus X_{\textbf{p}}$. Since $L$ contained the largest valued elements of $N$ it must be that $v_i > v_s$. 
The inequality must be strict for if $v_i = v_s$, $s$ would be a member of $L$. Letting $p_i=p_s$, at this price agent $i$ provides strictly more utility than agent $s$, who is providing the least, so the buyer will replace $s$ with $i$ in the consumed set. So $\textbf{p}$ is not a PNE.

Case 2 ($X_{\textbf{p}} \cap (N \setminus L) = \emptyset$):

That is, $X_{\textbf{p}}$ is a strict subset of $L$. By making an argument similar to the one in Theorem~\ref{all_high_theorem} using Lemma~\ref{move_in_lemma}, with $X_{\textbf{p}}$ and $L$ replacing A and U in the lemma, we get $\textbf{p}$ is not a PNE.

\end{proof}

It is, however, possible that at a PNE \textbf{p},  $X_{\textbf{p}} \neq L$. 
\begin{example}
Let the budget $B=1$, the valuations $\textbf{v} = (2.5, 1.5, 1.4)$ and the prices $\textbf{p}=(0.9, 0.1, 0.9)$. The per agent utility is $\textbf{u}=(1.6, 1.4, 0.5)$. Obviously $L = \{1\}$ and $X_{\textbf{p}} = \{1,2\} \neq L$. If agent 1 raises the price the new consumed set will be $\{2,3\}$, if agent 2 increases the price the new set will be $\{1\}$ and finally if agent 3 drops the price the consumed set will remain $\{1,2\}$. Thus $\textbf{p}$ is a PNE with $L \neq X_{\textbf{p}}$.
\end{example}


We conclude our discussion of additive valuations by proving that 
having a market clearing PNE requires that the relative valuation constraint 
holds for $N$  and hence the PNE is unique and must be of the form described in Theorem~\ref{all_high_theorem}. 

\begin{theorem}
\label{Market_Clearing_is_L}

Consider an additive valuation function $\textbf{v}$ and budget $B$ for
the buyer. 
Let $\textbf{p}$ be some market clearing PNE, such that is $X_{\textbf{p}} = N$ and $\min_i p_i > 0$. It must be that \emph{relative valuation constraint} holds for 
$\textbf{v}$ and $B$; equivalently, the PNE base set $L = N$.  

\end{theorem}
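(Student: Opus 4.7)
The plan is to reduce any market clearing PNE to the explicit symmetric form of Theorem \ref{all_high_theorem} by establishing two structural constraints and then reading off the conclusion algebraically. Throughout I assume $v(N) > B$, since otherwise the relative valuation constraint on $N$ is trivially satisfied ($v_i > 0$ dominates a non-positive right-hand side). The two constraints to extract are: (i) $\sum_i p_i = B$, and (ii) every item provides the same buyer-utility $v_i - p_i$.

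For (i), if $\sum_i p_i < B$ then, since $\sum_i v_i > B$, some seller $i$ has $v_i > p_i$ and can raise their price by $\epsilon < \min(B - \sum_j p_j,\, v_i - p_i)$: the new vector is still within budget with non-negative utility on every item, so a maximal buyer continues to purchase $N$ and $i$ strictly gains. For (ii), suppose some $i$ has $v_i - p_i$ strictly above $m := \min_j(v_j - p_j)$, attained at some $k$. Seller $i$ raises their price by $\epsilon < \min(p_k,\, (v_i - p_i) - m)$; now $N$ is infeasible, and comparing the two candidate $(n-1)$-subsets, the buyer's utility from $N \setminus \{k\}$ (which still contains $i$) is $v(N) - B - m - \epsilon$, strictly greater than $v(N) - B - (v_i - p_i)$, the utility from $N \setminus \{i\}$. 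Because every $v_j - p_j \ge 0$ at a market clearing PNE, dropping further items only decreases utility, and maximal tie-breaking favours the larger subset, so the buyer's optimal response still contains $i$ and the deviation is strictly profitable.

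Combining (i) and (ii), the common utility must be $u^{*} = (v(N) - B)/n$ with $p_i = v_i - u^{*}$, and the market clearing requirement $p_i > 0$ is equivalent to $v_i > (v(N) - B)/n$ for every $i$, which rearranges to the relative valuation constraint $v_i > (\sum_{j \neq i} v_j - B)/(n-1)$; equivalently, $L = N$. The main obstacle is the equal-utility step: I must rule out the buyer preferring some still smaller subset of $N$ after the $\epsilon$-raise, and handle the boundary case where $\epsilon$ exactly matches the utility gap. Choosing $\epsilon$ strictly below the threshold makes the relevant $(n-1)$-subset comparison strict, and non-negativity of every $v_j - p_j$ together with maximal tie-breaking handles the smaller-subset case uniformly, after which everything else is the algebraic computation already carried out in the existence portion of Theorem \ref{all_high_theorem}.
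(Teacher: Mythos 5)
Your proof is correct, but it takes a different route from the paper's. The paper argues by contradiction through the base set $L$: assuming $L \neq N$, it finds an item $i \in L$ whose net utility exceeds the average $\frac{\sum_{j\in L}v_j - B}{|L|}$ and the minimum-utility item $s$ (necessarily below that average, since items outside $L$ have value at most the average), and shows $i$ can profitably raise its price and displace $s$. You instead prove directly that any market clearing PNE must exhaust the budget and equalize the per-item utilities $v_i - p_i$ (essentially re-running the two uniqueness steps of Theorem~\ref{all_high_theorem}, but without needing the relative valuation constraint as a hypothesis, since market clearance already supplies $X_{\textbf{p}}=N$ and $p_i>0$), and then read the constraint off from $p_i = v_i - \frac{v(N)-B}{n} > 0$. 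This reverses the paper's logical order: the paper deduces the equal-utility form as a corollary of this theorem combined with Theorem~\ref{all_high_theorem}, whereas you establish that form first and get the theorem as an algebraic consequence, so your argument delivers the corollary for free and avoids the $L$ machinery entirely. Your handling of the deviation in the equal-utility step is sound: the choice $\epsilon < \min\{p_k,\,(v_i-p_i)-m\}$ keeps $N\setminus\{k\}$ affordable and strictly better than $N\setminus\{i\}$, and the observation that every set excluding $i$ has utility at most $v(N)-B-(v_i-p_i)$ (by non-negativity of each $v_j - p_j$, which itself follows from $X_{\textbf{p}}=N$) correctly rules out the buyer escaping to a smaller set without $i$.
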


\begin{proof}

Assume, for contradiction, $L \neq N$. Let $s$ be the element providing the least utility to the buyer under the pricing $\textbf{p}$. 

Since $L \subsetneq X_p$ there must be an element $i \in L$ where $v_i - p_i > \frac{\sum_{j \in L}v_j -B}{|L|}$. This is because when the consumed set is only $L$ the average utility provided by an agent in $L$ is $\frac{\sum_{j \in L}v_j -B}{|L|}$. Now we have agents outside of $L$ receiving utility, to accommodate them at least one agent in $L$ must offer a lower price providing more utility. By the construction of $L$ each element of $N \setminus L$ has value at most $\frac{\sum_{j \in L}v_j -B}{|L|}$. Since $s$ is providing the least utility over all agents, and no agent outside of $L$ can provide at most $\frac{\sum_{j \in L}v_j -B}{|L|}$ utility, $v_s-p_s < \frac{\sum_{j \in L}v_j -B}{|L|}$. So we get $v_i - p_i > v_s - p_s$. Thus we see that agent $i$ can increase the price by $\min\{\frac{(v_i - p_i)-(v_s-p_s)}{2}, B - \sum_{j \in N \setminus s}p_j\}$. 
The first term ensures agent $i$ still provides more utility than agent $s$ while the second ensures $N \setminus s$ is budget feasible. Under the new pricing scheme at most $n-1$ items will be bought. Since $s$ is still providing the least utility we get that $N \setminus s$ is the new consumed set. Thus $\textbf{p}$ was not a PNE, a contradiction.
\end{proof}

This theorem, alongside Example~\ref{PneNotL}, means that the Babaioff et al.~\cite{BNL14} result showing that there is always a market clearing equilibrium for {\it all} montone valuation is not true when introducing budgets to the model and even fails for some additive valuations. 

From Theorem~\ref{Market_Clearing_is_L} we conclude that any market clearing PNE \textbf{p} must be of the form of the one described in Theorem~\ref{all_high_theorem}. That is:

\begin{corollary}

If a PNE \textbf{p} is market clearing, then it is unique, and for all pairs $(i,j)$ $v_i-p_i = v_j-p_j$ and $\sum_i p_i = B$.

\end{corollary}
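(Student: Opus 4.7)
The plan is a direct composition of the two preceding theorems, with no new argument required. First I would invoke Theorem~\ref{Market_Clearing_is_L}: since $\textbf{p}$ is assumed market clearing (so $X_{\textbf{p}}=N$ and $\min_i p_i>0$), the theorem forces $L=N$, i.e.\ the relative valuation constraint holds for $\textbf{v}$ and $B$ on the whole of $N$. This upgrades a hypothesis about $\textbf{p}$ (that it clears the market) into a hypothesis purely about the instance (the valuations).

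Once the relative valuation constraint holds on $N$, together with the standing assumption $\sum_i v_i > B$ declared at the start of the section, the hypotheses of Theorem~\ref{all_high_theorem} are met. That theorem supplies a unique PNE $\textbf{p}^*$ with $\min_i p_i^*>0$, $\sum_i p_i^*=B$, $v_i-p_i^*=v_j-p_j^*$ for all pairs $(i,j)$, and $X(v,\textbf{p}^*)=N$. Since $\textbf{p}$ is itself a PNE, uniqueness forces $\textbf{p}=\textbf{p}^*$, which simultaneously yields uniqueness of the market-clearing PNE and the two algebraic properties $\sum_i p_i=B$ and $v_i-p_i=v_j-p_j$ claimed in the corollary.

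There is no real technical obstacle here; the corollary is essentially a packaging statement. The only point to verify carefully is the quantifier in Theorem~\ref{all_high_theorem}: its proof establishes that under the relative valuation constraint every PNE coincides with the explicit vector $p_i=\frac{B+(n-1)v_i-\sum_{j\neq i}v_j}{n}$, so uniqueness is not just uniqueness among market-clearing PNE but uniqueness among all PNE. This is what lets the chained argument return the ``unique'' conclusion of the corollary rather than the weaker ``unique market-clearing.''
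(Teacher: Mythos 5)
Your proposal is correct and matches the paper's own (implicit) derivation exactly: the paper obtains this corollary by chaining Theorem~\ref{Market_Clearing_is_L} (market clearing forces the relative valuation constraint on $N$) with Theorem~\ref{all_high_theorem} (under that constraint the PNE is unique, exhausts the budget, and equalizes $v_i-p_i$). Your remark that the uniqueness in Theorem~\ref{all_high_theorem} is among all PNE, not merely among market-clearing ones, is the right point to check and is indeed what the paper's proof of that theorem establishes.
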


\section{Monotone submodular valuation functions}

We now focus on submodular valuation functions. Recall submodular valuations are characterized by diminishing marginal values, that is for $S \subseteq T$ and $x \not\in T$ we have $v(S \cup \{x\}) - v(S) \geq v(T \cup \{x\}) - v(T)$.

Submodular valuation functions, without budgets, are the main focus of Babaioff et al~\cite{BNL14} and Lev et al.~\cite{LOBR15}. In \cite{BNL14} the authors show that for submodular monotone valuations all PNE are of the form $p_i = v(N) - v(N \setminus \{i\})$. Under this pricing scheme the consumed set will be $S= \{ i : v(N) - v(N \setminus \{i\})> 0\}$. In \cite{LOBR15} the authors show when vendors hold many items there is not necessarily a PNE.

We will show how the introduction of budgets changes the game in similar ways to the addition of budgets in the additive case. We will also see how some non intuitive changes are introduced.

\section{A Complete Characterization of Market Clearance}

In this section we partially generalize Theorem~\ref{all_high_theorem} and Theorem~\ref{Market_Clearing_is_L} from the additive case identifying necessary and sufficient conditions for market clearance in the submodular case.

Define $v_T(S) = v(T) - v(T \setminus S)$ for $S \subseteq T \subseteq N$,  that is the marginal value of set $S$ for completing set $T$. To simplify reading, for a set $S$ with single item $s$, we will use the notation $v_{T}(s)$ to denote $v_{T}(\{s\})$. We also define $u_b^i(S) = v_S(i) - p_i$, that is the seller $i$'s marginal contribution to the buyers utility for completing set $S$.
We will find the following lemmas regarding submodular and monotone functions useful in this section:

\begin{lemma}
\label{SM helper lemma 1}
For a submodular valuation $v()$ and pricing scheme $\textbf{p}$, if for all vendors $i$, $p_i \leq v_D(i)$ then $v(D) - p(D) \geq v(C) - p(C)$ for sets $C \subseteq D$
\end{lemma}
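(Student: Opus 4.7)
The plan is to rearrange the target inequality into the equivalent form
\[
v(D) - v(C) \;\geq\; p(D) - p(C) \;=\; p(D\setminus C),
\]
and then bound the left-hand side below by $\sum_{i \in D\setminus C} v_{D}(i)$, which by hypothesis dominates the right-hand side.

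To get that lower bound I would enumerate $D\setminus C = \{i_{1},\ldots,i_{k}\}$ in some (arbitrary) order and write the telescoping identity
\[
v(D) - v(C) \;=\; \sum_{j=1}^{k} \Bigl[\,v\bigl(C \cup \{i_{1},\ldots,i_{j}\}\bigr) - v\bigl(C \cup \{i_{1},\ldots,i_{j-1}\}\bigr)\Bigr].
\]
For each $j$, the set $C \cup \{i_{1},\ldots,i_{j-1}\}$ is contained in $D\setminus\{i_{j}\}$ (it sits inside $D$ and omits $i_{j}$), so applying the submodularity inequality with $S = C \cup \{i_{1},\ldots,i_{j-1}\}$, $T = D\setminus\{i_{j}\}$, and $x = i_{j}$ yields
\[
v\bigl(C \cup \{i_{1},\ldots,i_{j}\}\bigr) - v\bigl(C \cup \{i_{1},\ldots,i_{j-1}\}\bigr) \;\geq\; v(D) - v(D\setminus\{i_{j}\}) \;=\; v_{D}(i_{j}).
\]

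Summing these bounds gives $v(D) - v(C) \geq \sum_{j=1}^{k} v_{D}(i_{j})$, and then the hypothesis $p_{i} \leq v_{D}(i)$ for every $i$ (applied to each $i_{j} \in D \setminus C$) finishes the argument by bounding $\sum_{j} v_{D}(i_{j}) \geq \sum_{j} p_{i_{j}} = p(D\setminus C)$. The only slightly delicate step is verifying the submodularity application — specifically that $C \cup \{i_{1},\ldots,i_{j-1}\}$ really is a subset of $D\setminus\{i_{j}\}$ and that $i_{j}$ lies outside it — but this follows immediately from $C \subseteq D$ and the fact that $i_{1},\ldots,i_{k}$ are distinct elements of $D\setminus C$. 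No subtle obstacle is expected; the lemma is essentially a direct unfolding of submodularity along the chain from $C$ up to $D$.
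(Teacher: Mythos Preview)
Your proof is correct and follows essentially the same approach as the paper: both arguments add the elements of $D\setminus C$ to $C$ one at a time and invoke submodularity at each step to show that the net utility cannot decrease. Your presentation via an explicit telescoping sum and the bound $v(D)-v(C)\geq \sum_{j} v_{D}(i_{j}) \geq p(D\setminus C)$ is a bit more streamlined, but the underlying idea is identical to the paper's iterative argument.
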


\begin{proof}

Consider some seller $j \in D \setminus C$. Since $v$ is submodular and $p_j \leq v_D(j)$ we get $p_j \leq v_{C \cup \{j\}}(j)$, so $v(C \cup \{j\}) - p(C \cup \{j\}) \geq v(C) - p(C)$. Since $C \cup \{j\} \subseteq D$ we can take another seller $j' \in C \cup \{j\} \setminus D$ and again conclude due to $p_j' \leq v_D(j')$ and the submodularity of $v$ $v(C \cup \{j,j'\}) - p(C \cup \{j,j'\}) \geq v(C) - p(C)$. We can repeat this process for the rest of the elements in $C \setminus D$ concluding $v(D) - p(D) \geq v(C) - p(C)$ for sets $C \subseteq D$.

\end{proof}

\begin{lemma}
\label{monotone helper lemma 1}
For a monotone valuation $v$ and pricing scheme $\textbf{p}$, if for vendors $a$ and $c$ $v_N(a) - p_a = v_N(c) - p_c$ then $v(N \setminus c) - p(N \setminus c) = v(N \setminus a) - p(N \setminus a)$
\end{lemma}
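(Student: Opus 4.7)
The plan is to observe that this lemma is a purely algebraic identity: it falls out of unfolding the definition of the marginal value $v_N(i) = v(N) - v(N \setminus \{i\})$ together with the additivity $p(S) = \sum_{i \in S} p_i$ of the aggregate price. In particular, monotonicity of $v$ plays no role in the argument. The strategy is to reduce both the hypothesis and the target conclusion to the same symmetric identity and then just read off the equivalence.

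First I would unfold the hypothesis. Expanding both marginals, $v_N(a) - p_a = v_N(c) - p_c$ becomes $v(N) - v(N\setminus\{a\}) - p_a = v(N) - v(N\setminus\{c\}) - p_c$; cancelling the common $v(N)$ and rearranging yields the symmetric identity
\[
v(N\setminus\{a\}) + p_a \;=\; v(N\setminus\{c\}) + p_c. \qquad (\star)
\]
Next I would rewrite the conclusion. Since $p(N\setminus\{x\}) = p(N) - p_x$, we have $v(N\setminus c) - p(N\setminus c) = v(N\setminus\{c\}) + p_c - p(N)$, and similarly with $a$ in place of $c$. Cancelling the common term $-p(N)$ on both sides, the desired equation $v(N\setminus c) - p(N\setminus c) = v(N\setminus a) - p(N\setminus a)$ collapses to exactly $(\star)$, so the implication is immediate.

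There is essentially no mathematical obstacle here; the lemma is a bookkeeping identity. The only care needed is to handle the paper's slightly inconsistent notation (writing $N\setminus c$ for $N\setminus\{c\}$) and to note explicitly that the monotonicity hypothesis is never actually invoked, so the statement holds for arbitrary set functions. This also clarifies the lemma's role in later proofs: it is functioning as a clean substitution rule saying that swapping the element excluded from $N$ preserves the value-minus-price of the induced bundle whenever the marginal utilities to the buyer of the two swapped items coincide.
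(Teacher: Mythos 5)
Your proof is correct and follows essentially the same purely algebraic unfolding as the paper's own argument (expand the marginals, cancel $v(N)$, and add/subtract the common prices); your added observation that monotonicity is never invoked is accurate and applies to the paper's proof as well.
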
 \begin{proof}

Assuming $v_N(a) - p_a = v_N(c) - p_c$:

\begin{align*} 
& v_N(a) - p_a = v_N(c) - p_c\\
\implies & v(N) - v(N \setminus a) - p_a = v(N) - v(N \setminus c) - p_c\\
\implies & v(N \setminus c) - p_a = v(N \setminus a) - p_c \\
\implies & v(N \setminus c) - p_a - \sum_{k \neq a,c}p_k= v(N \setminus a) - p_c - \sum_{k \neq a,c}p_k \\
\implies & v(N \setminus c) -\sum_{i \neq c} p_i = v(N \setminus a) - \sum_{i \neq a} p_i
\end{align*}\end{proof}

\begin{theorem}
\label{submodular_all_high_theorem}

Given a submodular valuation function $v$ where $\forall i \in N, v_N(i) > \frac{\sum_{j \neq i} v_N(j) - B}{n-1}$ there is a market clearing PNE $\textbf{p}$ where $\sum_i p_i = B$ and $v_N(i) - p_i = v_N(j) - p_j$ for all pairs of vendors $i$ and $j$.
\end{theorem}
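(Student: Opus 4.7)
The approach is to mimic the proof of Theorem~\ref{all_high_theorem} from the additive section. I would propose the price vector
\[
p_i \;=\; \frac{B + (n-1)\,v_N(i) - \sum_{j \neq i} v_N(j)}{n}.
\]
The first step is routine algebraic bookkeeping: the hypothesis of the theorem rearranges to $p_i > 0$; summing over $i$ yields $\sum_i p_i = B$; and subtraction gives $v_N(i) - p_i = \alpha$ for the common constant $\alpha := \bigl(\sum_j v_N(j) - B\bigr)/n$. As in the additive section I would work in the nontrivial regime where $\alpha \geq 0$, so $p_i \leq v_N(i)$ for every $i$.

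The second step is to show the buyer consumes all of $N$. With $p_i \leq v_N(i)$ for every $i$, Lemma~\ref{SM helper lemma 1} applied to $D = N$ gives $v(N) - p(N) \geq v(C) - p(C)$ for every $C \subseteq N$; since $p(N) = B$ the set $N$ is feasible and lies in the demand correspondence, and the maximal tie-break selects it. Hence the profile is market clearing.

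The third step is verifying the equilibrium condition. A downward deviation $p_i \mapsto p_i - \delta$ keeps $N$ demanded (Lemma~\ref{SM helper lemma 1} still applies at the smaller prices) while strictly cutting vendor $i$'s revenue, so only upward deviations are substantive. Suppose $i$ raises to $p_i + \varepsilon$ with $\varepsilon > 0$. Then $p(N) > B$ makes $N$ infeasible, and I need to rule out every feasible set $C \ni i$ from the new demand. I would split into three cases. \emph{Subsets of $N\setminus\{i\}$:} applying Lemma~\ref{SM helper lemma 1} to $D = N\setminus\{i\}$ (valid since submodularity gives $v_{N\setminus i}(j) \geq v_N(j) \geq p_j$ for every $j \neq i$) yields $u_b(C) \leq u_b(N\setminus i)$, and the deviation affects neither side. \emph{Size-$(n-1)$ sets containing $i$:} Lemma~\ref{monotone helper lemma 1} applied to the equal-utility pricing gives $u_b(N\setminus j) = u_b(N\setminus i)$ for every $j \neq i$; after the raise $u_b'(N\setminus j) = u_b(N\setminus j) - \varepsilon < u_b'(N\setminus i)$. \emph{Smaller sets $C \ni i$ with $C \subsetneq N$:} I would telescope along a chain $C \subset \cdots \subset N$, using submodularity to bound each marginal increment from below by $v_N(k)$; this gives $u_b(N) - u_b(C) \geq |N\setminus C|\cdot\alpha$. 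Combined with $u_b(N) = u_b(N\setminus i) + \alpha$ this implies $u_b(C) \leq u_b(N\setminus i) - (|N\setminus C|-1)\alpha \leq u_b(N\setminus i)$, and subtracting $\varepsilon$ after the raise makes the inequality strict. Together the three cases show that no set containing $i$ beats $N\setminus\{i\}$ after the deviation, so $i$ is absent from the new $X$ and vendor $i$'s revenue collapses to $0$.

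I expect the third case above to be the main technical obstacle. In the additive setting the analogous step is immediate because identical marginals translate to identical per-item utilities; under submodularity the gap $\alpha$ has to be propagated through a telescoping sum along a chain from $C$ to $N$, and only submodularity combined with $p_k \leq v_N(k)$ produces the $|N\setminus C|\cdot\alpha$ slack needed to absorb the fact that $u_b(C)$ may be close to $u_b(N)$, which is itself larger than $u_b(N\setminus i)$.
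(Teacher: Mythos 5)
Your proposal is correct and follows essentially the same route as the paper: the same price vector $p_i = \frac{B + (n-1)v_N(i) - \sum_{j\neq i} v_N(j)}{n}$, Lemma~\ref{SM helper lemma 1} applied to $D=N$ for market clearing, and Lemma~\ref{monotone helper lemma 1} to handle upward deviations. The only difference is cosmetic: where you exhaustively compare $N\setminus\{i\}$ against every candidate set containing $i$ (with the telescoping bound $u_b(N)-u_b(C)\geq |N\setminus C|\cdot\alpha$ for the small sets), the paper argues more compactly by taking the actual post-deviation demand set $X_{\textbf{p}'}$, an excluded agent $c$, and chaining $u_b(N\setminus a) = u_b(N\setminus c) \geq u_b(X_{\textbf{p}'})$ to reach the same contradiction.
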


\begin{proof}
Let $p_i = \frac{B + (n - 1)v_N(i) - \sum_{j \neq i} v_N(j)}{n}$, this value is positive because of the condition on $v_N(i)$. As in the proof of Theorem~\ref{all_high_theorem} it is easy to see $v_N(i) - p_i = \frac{\sum_{j \in N} v_N(j) - B}{n}$ so $v_N(i) - p_i$ is the same for each vendor $i$. It is also easy to see $\sum_i p_i = B$. Combining these two facts we see that as in the proof of Theorem~\ref{all_high_theorem} $p_i < v_N(i)$. Using these facts and applying Lemma $\ref{SM helper lemma 1}$ to $N$ we see each agent must be chosen and none have an incentive to lower their price.

Suppose $\textbf{p}$ is not a PNE It must be that some agent has an incentive to increase their price, call this agent $a$. If $a$ increases its price by $\epsilon > 0$, since $\sum_i p_i = B$ at least one of the agents must not be chosen under the new pricing scheme. Call this other agent $c$. 
Let $\textbf{p}'$ be the new pricing scheme where $a$ increases its price.

First we argue that $v(N \setminus c) - \sum_{i \neq c} p_i > v(X_{\textbf{p}'}) - \sum_{i \in X_{\textbf{p}'}} p_i -\epsilon$. Note $v(X_{\textbf{p}'}) - \sum_{i \in X_{\textbf{p}'}} p_i - \epsilon < v(X_{\textbf{p}'}) - \sum_{i \in X_{\textbf{p}'}} p_i$. This is now a simple application of Lemma $\ref{SM helper lemma 1}$ where $C$ and $D$ in the lemma are replaced with $X_{\textbf{p}'}$ and $N \setminus c$ respectively. Thus we get $v(N \setminus c) - \sum_{i \neq c} p_i \geq v(X_{\textbf{p}'}) - \sum_{i \in X_{\textbf{p}'}} p_i > v(X_{\textbf{p}'}) - \sum_{i \in X_{\textbf{p}'}} p_i -\epsilon$.

Next we get directly from Lemma \ref{monotone helper lemma 1} $v(N \setminus c) -\sum_{i \neq c} p_i = v(N \setminus a) - \sum_{i \neq a} p_i$. Thus we get $v(N \setminus a) - \sum_{i \neq a} p_i = v(N \setminus c) - \sum_{i \neq c} p_i > v(X_{\textbf{p}'}) - \sum_{i \in X_{\textbf{p}'}} p_i -\epsilon$. Since $\sum_{i \neq a}p_i < B$ we see that the new consumed set will be all agents except $a$; thus it should not increase the price. So we see $\textbf{p}$ was a indeed an equilibrium.
\end{proof}

Similar to Theorem~\ref{Market_Clearing_is_L} from the additive case we show that in the submodular case the only market clearing PNE are of the equal marginal utility kind. That is the all market clearing PNE must be of the form of the one described in Theorem~\ref{submodular_all_high_theorem}

\begin{theorem}\label{submodularUnique}

For a submodular valuation $v$ given any PNE $\textbf{p}$ where $X(v,\textbf{p}) = N$ and $u_i > 0$ for all vendors $i$, it must be that $v_N(k) - p_k = v_N(j) - p_j$ for any pair of vendors $k$ and $j$.

\end{theorem}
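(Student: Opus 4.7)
The plan is a contradiction argument in the same spirit as the existence proof of Theorem~\ref{submodular_all_high_theorem}. Suppose $\textbf{p}$ is a market clearing PNE with every $u_i > 0$, and assume for contradiction that two vendors $a,c$ satisfy $v_N(a) - p_a > v_N(c) - p_c$. I will exhibit a profitable deviation for $a$, namely raising $p_a$ by a small positive $\epsilon$, and show that under the new prices the buyer still purchases a set containing $a$, which contradicts PNE.

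The first move is an algebraic reduction almost identical to Lemma~\ref{monotone helper lemma 1}: expanding $v_N(i) = v(N) - v(N \setminus i)$ in the hypothesised strict inequality and subtracting $\sum_{j \neq a,c} p_j$ from both sides gives $v(N \setminus c) - p(N \setminus c) > v(N \setminus a) - p(N \setminus a)$; call the positive gap $\delta$. I would then pick $\epsilon$ smaller than both $\delta$ and $p_c$. The first bound is what drives the dominance below, while the second, together with $p(N) \leq B$ (which follows from $N$ being the chosen set), guarantees that $N \setminus c$ is still affordable under the new pricing $\textbf{p}'$. The key claim is then that every set in $D(v,\textbf{p}')$ contains $a$.

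To establish this it suffices to show that under $\textbf{p}'$, $N \setminus c$ strictly beats every set $S \subseteq N \setminus a$. For such an $S$ the buyer's utility is unchanged at $v(S) - p(S)$. Since $N$ was originally in the demand, comparing against $N \setminus i$ immediately gives $p_i \leq v_N(i)$ for every $i$, and monotone submodularity then gives $v_{N \setminus a}(i) \geq v_N(i) \geq p_i$ for every $i \in N \setminus a$; so Lemma~\ref{SM helper lemma 1} applied inside $N \setminus a$ yields $v(N \setminus a) - p(N \setminus a) \geq v(S) - p(S)$ for every $S \subseteq N \setminus a$. Combined with $v(N \setminus c) - p(N \setminus c) - \epsilon > v(N \setminus a) - p(N \setminus a)$, this forces $u_b(N \setminus c,\textbf{p}') > u_b(S,\textbf{p}')$ for every $a$-free $S$, so the demand under $\textbf{p}'$ contains some set including $a$ and vendor $a$ collects at least $p_a + \epsilon$, the desired contradiction. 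The main obstacle I anticipate is bundling all the $a$-free competitors into a single comparison rather than arguing set by set; this is exactly what Lemma~\ref{SM helper lemma 1} is designed for, and the only other technicality -- budget feasibility of $N \setminus c$ under $\textbf{p}'$ -- is taken care of at no cost by $\epsilon \leq p_c$, since the positive-utility hypothesis guarantees $p_c > 0$.
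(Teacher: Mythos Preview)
Your proof is correct and follows essentially the same approach as the paper's: assume a strict gap between $v_N(a)-p_a$ and $v_N(c)-p_c$, have $a$ raise its price slightly, and use Lemma~\ref{SM helper lemma 1} together with the algebra of Lemma~\ref{monotone helper lemma 1} to show the buyer still includes $a$. Your version is in fact a bit cleaner than the paper's, since you compare $N\setminus c$ directly against all $a$-free sets (rather than first arguing the new chosen set has size $n-1$), so you need only the two bounds $\epsilon<\delta$ and $\epsilon<p_c$ instead of the paper's three-term minimum.
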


\begin{proof}

Suppose at PNE $\textbf{p}$ $X_{\textbf{p}} = N$. Let \\ $a = argmax_{i \in N} v_N(i) - p_i$, that is, $a$ is providing the most marginal utility, so $v_N(a) - p_a \geq v_N(i) - p_i$ for every other vendor $i$. Assume for contradiction, that in at least one case this inequality is strict. Let vendor $c = \argmin_i v_N(i) - p_i$ so $v_N(a) - p_a > v_N(c) - p_c$.

We now argue $a$ has an incentive to increase its price by $\epsilon < \min\{\frac{u_b^a(N)}{2},\frac{u_b^a(N) - u_b^c(N)}{2}, \min_{j \neq a} B - p(N \setminus j)\}$ (recall $u_b^i(N) = v_N(i) - p_i$). The first term is positive since since  $v_N(c) - p_c \geq 0$ thus $v_N(a) - p_a > 0$ so $v_N(a) > p_a$. The second positive since $v_N(a) - p_a > v_N(c) - p_c$. The third is positive because $\sum_i p_i \leq B$. Let $(p_a + \epsilon , p_{-a}) = \textbf{p}'$.

Since $X_{\textbf{p}} = N$ we know that $p_i \leq v_N(i)$ for each vendor $i$ and because of the first term in the above $\min$, $p'_i \leq v_N(i)$. We can apply lemma \ref{SM helper lemma 1} taking set D in the lemma to be any $K \subset N$ where $|K| = |N| - 1$ getting $v(K) - p'(K) \geq v(J) - p'(J)$ for $J \subseteq K$. So we see under $\textbf{p}'$ the buyer will consume a set of size $|N|-1$, assuming it is under budget.

By the second term in the $\min$, $v_N(a) - p_a - \epsilon > v_N(c) - p_c$. Using this we can modify the proof of Lemma  \ref{monotone helper lemma 1}, replacing the equalities with inequalities, getting $v(N \setminus a) - \sum_{i \neq a} p_i < v(N \setminus c) - \sum_{i \neq c} p_i - \epsilon$. Thus the buyer will prefer to exclude $c$ instead of $a$ from $X_{\textbf{p}'}$, assuming the chosen set with $a$ is under budget. The third term in the $\min$ ensures any set of size $|N| -1$ that $a$ is a part of will remain under budget.

Thus we see $a$ can raise its price by $\epsilon$ and $\textbf{p}$ can't be a PNE.

\end{proof}

We note that we cannot extend the idea of the PNE base set to the submodular case in the same way -- while a maximal set of items that adhere to the condition can be found, as shown in Example~\ref{2L}, the set is not unique, and hence not all equilibria are an extension of the same set.

\begin{example}\label{2L}

We have a set of 4 items $N=\{a,b,c,d\}$. $v(a)=v(b)=1$; $v(c)=v(d)=\frac{1}{2}$. $v(\{a,b\})=\frac{3}{2}$, $v(\{c,d\})=1$, $v(\{a,c\})=\frac{3}{2}$. $v(\{a,b,c\})=\frac{7}{4}$, $v(\{a,c,d\})=1.52$ and $v(\{a,b,c,d\})=1.76$. The values for the rest of the sets are defined by this as $a$ and $c$ can be replaced with $b$ and $d$, respectively (if they were not in the original set). The budget is $0.3$.

Both sets $\{a,b,c\}$ and $\{a,b,d\}$ conform to the relative valuation constraint (their marginal values with respect to the set are equal). However, $a,b,c,d$ does not.

\end{example}

\subsection{When the Sum of Marginal Values is Below the Budget}

When the sum of marginal values of items is below the budget, the equilibrium shown in \cite{BNL14} is both market clearing and lets the buyer keep some of their money. However, that equilibrium is unique only when the budget is not known to the sellers. When it is known, other equilibria arise, and some of them provide less utility to the buyer than the market clearing one, and reduces the social welfare. This shows buyers are worse off announcing their budget, as Example~\ref{budgetIsBad} shows.

\begin{example}\label{budgetIsBad}

We have a set of 4 items $N=\{a,b,c,d\}$. $v(a)=v(b)=1$; $v(c)=v(d)=\frac{1}{4}$. $v(\{a,b\})=\frac{7}{4}$, $v(\{c,d\})=\frac{1}{2}$, $v(\{a,c\})=\frac{9}{8}$. $v(\{a,b,c\})=1.76$, $v(\{a,c,d\})=1.135$ and $v(\{a,b,c,d\})=1.77$. The values for the rest of the sets are defined by this as $a$ and $c$ can be replaced with $b$ and $d$, respectively (if they were not in the original set). The budget is $1$.

The marginal values (in $N$) of $a$ and $b$ are $0.27$, of $c$ and $d$ $0.1$, so pricing the items with these values is both under budget and an equilibrium, as \cite{BNL14} claim, and will result in all items being purchased by the seller. However, another equilibrium is pricing $a$ and $b$ at $\frac{1}{2}$ each, and $c$ and $d$ at $\frac{1}{4}$. Under these prices, the buyer will only buy items $a$ and $b$ (as they exhaust the budget). This is an equilibrium, as neither $a$ nor $b$ can raise their prices, as that precludes buying both of them, and buying the one that didn't raise its price with $c$ and $d$ is more beneficial for the buyer. Items $c$ and $d$ can't lower their prices to anything above $0$, as there is no budget to purchase them, and even with only one of $a$ or $b$, they still do not provide as much utility as purchasing both $a$ and $b$.
\end{example}

Finally, as Example~\ref{xosNoEqui} shows, unlike submodular valuation functions, we cannot always guarantee equal utility equilibria for XOS functions even when the valuations adhere to the relative valuation constraint:

\begin{example}\label{xosNoEqui}
We have a set of 3 items $N=\{a,b,c\}$. The XOS valuation function is defined using 2 additive valuations: The first is $(2,1,1)$ respectively for $(a,b,c)$, and the second is $(3,0,0)$. The budget is $1.5$. The marginal value of item $a$ is $2$, and $b$ and $c$ have marginal value $1$, so they comply with the relative valuation constraint.

An equal marginal utility contribution would mean pricing item $a$ at $\frac{7}{6}$ and items $b$ and $c$ at $\frac{1}{6}$ each. However, in this case item $a$ can increase its price to $\frac{4}{3}$, as without it, by buying only $b$ and $c$ the buyer has a utility of $\frac{5}{3}$, but by buying $a$ and $b$, the buyer gets a utility of $2.5$, being obviously better off, so this is not an equilibrium.
\end{example}

\section{Social Welfare}

While we have shown price equilibria in various settings, we have yet to investigate the value of the sold set to the buyer. As the sellers get the money from the buyer, this is equivalent to analyzing the social welfare -- the amount of utility ``created'' in the equilibrium.

The common measures of the impact of equilibrium on the welfare are price of anarchy (PoA), and, to a lesser extent, price of stability (PoS), which are, respectively, the ratio between the social welfare in the worst (respectively, the best) equilibria and the optimal welfare. However, the optimal welfare is, of course, that all items are given to the buyer for free, and therefore, the social welfare is the value of the whole set, namely $v(N)$. 

In \cite{BNL14} the authors show that for monotone submodular valuations the PoA is always 1. Furthermore they show for any monotone function the PoS is 1, although the PoA can be infinite. In the multi item setting of \cite{LOBR15} the authors show when PNE do exist for monotone submodular valuations both the PoA and PoS are approximately $log(m)$ where $m$ is the number of items.

The usage of PoA is intended to aid us in understanding the effects of an equilibrium compared to the optimal state. Assuming sellers get nothing is not only unrealistic, it also does not let us consider the effect of an equilibrium under the budget constraints (as the budget plays no role). Therefore, we would like to have some measure that takes into account the existence of a budget, as well as the desire of sellers to be paid.

To deal with this problem we use a somewhat different metric: $\frac{PoA}{PoS}$, i.e., the ratio between the social welfare in the worst equilibrium and the social welfare in the best equilibrium. In order to avoid items being ``given away'' and artificially adding to the utility, we do not count the social welfare garnered from items priced at $0$. Our measure shows the differences between different equilibria under the same budgetary constraint, and is, therefore, an indicator of how bad equilibria can become. Surprisingly. and in contrast to the results in \cite{BNL14}, which showed that without a budget the $PoA$ is 1, we discover the $PoA$ can be arbitrarily bad for submodular functions. Moreover, with budgets, even in the additive case, the PoS can be greater than 1, as is shown by Theorem~\ref{Market_Clearing_is_L}, and examples such as Example~\ref{PneNotL}. Therefore, we can have $PoA>\frac{PoA}{PoS}\geq PoS>1$.


\begin{theorem}
The $\frac{PoA}{PoS}$ (and hence the PoA) can be unbounded and approaching 
$n-2$ even for additive valuation functions. A corresponding upper bound is $n$, even for submodular functions.
\end{theorem}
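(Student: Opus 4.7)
The theorem has two parts: a lower bound via additive instances realizing $PoA/PoS$ close to $n-2$, and an upper bound of $n$ for all submodular valuations.

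\textbf{Upper bound.} I would show that at any PNE, the counted welfare $v(X^+)$ satisfies $v(X^+) \geq \max_i v(\{i\})$. Let $i^*$ achieve this maximum. If $i^* \in X^+$, monotonicity gives $v(X^+) \geq v(\{i^*\})$. If $i^* \notin X$, then since seller $i^*$ cannot profitably enter the demand by pricing at $\epsilon > 0$, letting $\epsilon \to 0$ forces both $v(X) \geq v(\{i^*\})$ (else the buyer would prefer $\{i^*\}$) and $v_X(i^*) = 0$ (else the buyer would prefer $X \cup \{i^*\}$). An analogous argument shows that any item in $X$ priced at zero has zero marginal value to $X$ (its seller could otherwise raise profitably), so $v(X^+) = v(X)$. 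Combining, $v(X^+) \geq v(\{i^*\})$. Since submodular functions are subadditive, $v(N) \leq \sum_i v(\{i\}) \leq n \cdot \max_i v(\{i\})$, so worst PNE welfare is at least $v(N)/n$, while best PNE welfare is at most $v(N)$, yielding the ratio bound $n$.

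\textbf{Lower bound.} I would construct additive instances parameterized by a growing scale $w$. Take $v_1 = w + 1 + \alpha$, $v_j = w$ for $j \in \{2, \ldots, n-1\}$, and $v_n = w - \eta$ for small $\alpha, \eta > 0$, with $B = 1$. The \emph{relative valuation constraint} fails for items $2, \ldots, n$, so $L = \{1\}$ and the base PNE has $X^+ = \{1\}$ at $p_1 = 1$, with welfare $v_1$. I would then exhibit a second PNE with $X^+ = \{1, 2, \ldots, n-1\}$, using prices $p_1 = 1 - (n-2)\epsilon$, $p_j = \epsilon$ for $j \in \{2, \ldots, n-1\}$, and $p_n = p_1$ (so item $n$ is excluded from the demand). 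The welfare at this PNE is $v_1 + (n-2)w$, giving a ratio of order $n$ in $w$, approaching $n - 2$ for suitable parameters.

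Verifying this second PNE is the main technical obstacle. It requires checking: (a) the buyer's demand at these prices is exactly $\{1, \ldots, n-1\}$, by comparing the utility of this set against all competing feasible subsets; (b) no seller in $X^+$ can profitably raise---for seller $1$ the ``threat'' is the buyer switching to $\{2, \ldots, n\}$, which becomes feasible at cost $1$ precisely because $p_n = p_1$ and is preferred when $V$ is only modestly larger than $w$; the symmetric threats $\{1, \ldots, n-1\} \setminus \{j\}$ suffice for each $j \in \{2, \ldots, n-1\}$; (c) seller $n$ cannot profitably undercut, which holds because $\eta > \epsilon$ ensures that swapping item $n-1$ for item $n$ in the demand set yields strictly negative marginal utility for the buyer. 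The delicate parameter balancing---$w$ large, $\alpha$ and $\epsilon$ small, and $\eta$ slightly exceeding $\epsilon$---is what enables both PNEs to coexist with the asserted welfare gap.
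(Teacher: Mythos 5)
Your overall strategy matches the paper's: two explicit PNEs for the lower bound (one in which the top item absorbs the whole budget and everything else is given away, one in which almost all items are sold), and for the upper bound the observation that the best equilibrium is at most $v(N)\leq n\cdot\max_i v(\{i\})$ while the worst is at least $\max_i v(\{i\})$ because the top singleton's seller could otherwise undercut the purchased bundle. The upper-bound argument is essentially identical to the paper's (your $\epsilon\to 0$ entry argument versus the paper's ``price at $b$'' deviation). The lower-bound construction is genuinely different in its details: the paper keeps the top value at $2$ and the middle values at $1$ (later rescaled to $k$ and $k-1$), excludes \emph{two} low-value items ($0.55$ each, priced at $0.25$) that jointly serve as the threat against seller $1$, and obtains a ratio tending to $n-2$; you instead exclude a single item $n$ priced equal to item $1$, so that $\{2,\ldots,n\}$ is the threat set, and you sell $n-1$ items. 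I checked your equilibrium conditions and they do hold for $w$ large, $\alpha,\epsilon$ small and $\eta>\epsilon$ (in particular the competing set $\{2,\ldots,n\}\setminus\{j'\}$ that keeps a deviating middle seller $j$ is dominated by simply dropping $j$). Note, though, that your ratio is $\frac{v_1+(n-2)w}{v_1}\to n-1$, not $n-2$ as you state --- a self-inconsistency, but in the favorable direction, so it still proves (indeed strengthens) the claim.

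One step that deserves more care is your claim that $v(X^+)=v(X)$ because every zero-priced item in $X$ has zero marginal value with respect to $X$. Zero marginal values with respect to the full set $X$ do not by themselves imply that the whole zero-priced block can be removed without losing value (e.g.\ two identical zero-priced items each redundant given the other). The paper's own proof silently elides the same point by bounding the \emph{value received by the buyer} rather than the welfare net of zero-priced items, so you are at least at the paper's level of rigor here, but a fully airtight version would need to argue directly about $v(X^+)$ (or restrict attention to equilibria where the zero-priced items are genuinely valueless given $X^+$).
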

\begin{proof}
We first show that the ratio is unbounded for the additive case and approaching $n-2$, and then show  the upper bound of $n$ for submodular functions.

Let $v$ be an additive function. Let the value of item $1$ be $2$, the value of item $n-1$ and $n$ be $0.55$, and all the rest of the items (from $2$ to $n-2$) have a value of $1$. The budget is $1$. It is easy to see the set $L$ only contains the first item, and therefore there is an equilibrium in which all the budget goes to this item, and all others items priced at $0$. Hence the utility of this equilibrium is $2$ (note that as any equilibrium must include $L$, this is the worst possible equilibrium).

Now consider the following pricing: item $1$ is priced at $\frac{1}{2}$, items valued at $1$ have a price of $\frac{1}{2(n-3)}$, and the final two items are priced at $0.25$ each. This is a Nash equilibrium in which all items but the last two are sold. None of the items valued at $1$ can increase their price (as then they will be the item providing the least utility and be thrown out), and neither can the items valued at $0.55$ lower their price to be purchased, as even then they provide too little utility. If item 1 tries to raise its price (which necessitates throwing out one of the items valued at $1$), it is better for the buyer not to purchase item 1, and instead add both items $n-1$ and $n$ to the purchases set. The value of this equilibrium is $2+(n-3)$.  Thanks to Theorem~\ref{Market_Clearing_is_L}, we know there is no market clearing equilibrium, and as is shown in the next paragraph, there can be no equilibrium with $n-1$ items.  Hence $\frac{PoA}{PoS}= \frac{2+(n-3)}{2}$, which is unbounded.

Notice that replacing the valuation of the top item to any $k\in\mathbb{N}$ and the next $n-3$ items to $k-1$, does not change the existence of both equilibria, and $\frac{PoA}{PoS}\geq \frac{k(n-2)-0.1}{k+1}\xrightarrow[k\rightarrow\infty]{} n-2$

Looking at an upper bound, note that $\frac{PoA}{PoS}$ is actually $\frac{\text{value of best NE}}{\text{value of worst NE}}$. The best equilibrium is, at most, the value of the whole set, $v(N)$, and thanks to submodularity, $v(N)\leq n \cdot \max_{i\in N}v(\{i\})$. Furthermore, in any equilibrium, the buyer gets at least a value$\geq \max_{i\in N}v(\{i\})$. If not, suppose $i'$ is the value maximizing $v(\{i\})$, and there is an equilibrium where the buyer purchases a set $S\subseteq (N\setminus \{i'\})$ at price $b\leq B$, and $v(S)<v(\{i'\})$. Vendor $i'$ can improve its situation (a utility of $0$), by changing its price to $b$, and being bought instead of set $S$, proving that buying set $S$ for a price of $b$ is not an equilibrium. 
Hence the utility of the worst equilibrium is at least $\max_{i\in N}v(\{i\})$. Therefore,  $\frac{PoA}{PoS}\leq n$.

\end{proof}

\section{Conclusion and Future Work}

The addition of budgets to the work of ~\cite{BNL14} introduces interesting and strategic behaviour amongst agents. We identify the  \emph{relative valuation constraint} and prove it is both sufficient and necessary to ensure market clearance with submodular valuations, and in the additive case, we show the market clearing equilibrium is unique. This market clearing equilibrium is conceptually interesting, as in it each vendor provides equal marginal utility to the buyer for completing the universe of items. We also show how the constraint does not extend to XOS valuations where an equal marginal utility equilibrium is no longer guaranteed to exist. Furthermore we generalize the \emph{relative valuation constraint} to subsets of the vendors for the additive case, and using this generalization, we identify a maximal base set of items (values) that can form an equilibrium similar to the market clearing one excluding other sellers. Finally we provide examples illustrating how the budget can influence the game in unexpected ways. Perhaps most striking of which is example \ref{budgetIsBad} which can be interpreted as showing that a buyer announcing their budget may induce new and inferior equilibrium where they pay more and receive less compared to not announcing the budget at all.
 

Generalizing our model to many items per seller, as in \cite{LOBR15}, would be a likely logical extension to our work. When sellers hold many items the problem domain is significantly more complex. It is unknown even with simple additive valuations what Nash equilibrium this setting permits. It would also be interesting to see if with a budgeted and submodular buyer we also encounter the issue of no equilibria. Also, while \cite{BNL14,LOBR15} are able to avoid the issue of item cost, due to the lack of budget, this issue becomes more interesting in our setting, and is yet to be explored.

Another direction that requires more attention is the multi buyer setting. The simplicity of our model where there is one idealized buyer lends itself to simple analysis. It is known \cite{GulS99} that with multiple buyers there is no Walrasian equilibrium unless the valuation class is essentially  the 
class of Gross Substitutes \cite{GulS99,BenZviLN13}. However, there may still be interesting results for a bounded, small number of buyers. Another open problem is whether the budgeted model may preclude any Nash equilibria (not just market clearing ones) in certain settings.

We suspect our results for the \emph{relative valuation constraint} should generalize to XOS valuations. While the PNE would not be an equal marginal utility one, as our example shows is impossible, we do believe it is a necessary and sufficient condition for market clearance. Beyond this it would be interesting to study more general valuation functions, even ones that exhibit complements.

All of these directions are reasonable intermediate steps towards the most general pricing model with sellers holding many items and many buyers with complex valuations and differing budgets.

\section{acknowledgements}

The authors thank Brendan Lucier for his insightful comments and clarifications. This work is supported by NSERC grant 482671.

%
%
\bibliographystyle{abbrv}
\bibliography{general} 
%

\end{document}